\newtheorem*{rep@theorem}{\rep@title}
\newcommand{\newreptheorem}[2]{%
\newenvironment{rep#1}[1]{%
 \def\rep@title{#2 \ref{##1}}%
 \begin{rep@theorem}}%
 {\end{rep@theorem}}}
\newtheorem{theorem}{Theorem}[section]
\newtheorem{lemma}[theorem]{Lemma}
\author{Dan Alistarh \\ IST Austria \and
Nikita Koval \\ IST Austria, JetBrains \and
Giorgi Nadiradze \\ IST Austria}
\date{}
\title{Efficiency Guarantees for Parallel Incremental Algorithms under Relaxed Schedulers}
\newtheorem{claim}{Claim}
\DeclareMathOperator{\poly}{poly}
\begin{document}

\maketitle{}

\begin{abstract}
    Several classic problems in graph processing and computational geometry are solved via incremental algorithms, which split computation into a series of small tasks acting on shared state, which gets updated progressively. 
    While the sequential variant of such algorithms usually specifies a fixed (but sometimes random) order in which the tasks should be performed, a standard approach  to parallelizing such algorithms is to \emph{relax} this constraint to allow for out-of-order parallel execution. This is the case for parallel implementations of Dijkstra's single-source shortest-paths (SSSP) algorithm, and for parallel Delaunay mesh triangulation.
    While many software frameworks parallelize incremental computation in this way, it is still not well understood whether this relaxed ordering approach can still provide any complexity guarantees. 
    
    In this paper, we address this problem, and analyze the efficiency guarantees provided by a range of incremental algorithms when parallelized via relaxed schedulers. 
    We show that, for algorithms such as Delaunay mesh triangulation and sorting by insertion, schedulers with a maximum relaxation factor of $k$ in terms of the maximum priority inversion allowed will introduce a maximum amount of wasted work of $O(\log n \textnormal{ poly } (k) ), $ where $n$ is the number of tasks to be executed. 
    For SSSP, we show that the additional work is  $O( \textnormal{ poly } (k) \, d_{
    max} / w_{min}), $ where $d_{\max}$ is the maximum distance between two nodes, and $w_{
    \min}$ is the minimum such distance. In practical settings where $n \gg k$, this suggests that the overheads of relaxation will be outweighed by the improved scalability of the relaxed scheduler. 
    On the negative side, we provide lower bounds showing that certain algorithms will inherently incur a non-trivial amount of wasted work due to scheduler relaxation, even for relatively benign relaxed schedulers.  
\end{abstract}

\section{Introduction}

Several classic problems in graph processing and computational geometry can be solved  \emph{incrementally}: algorithms are structured as a series of \emph{tasks}, each of which examines a subset of the  algorithm state, performs some computation, and then updates the state. 
For instance, in Dijkstra's classic graph single-source shortest paths (SSSP) algorithm~\cite{dijkstra1959note}, the state consists of the current distance estimates for each node in the graph, and each task corresponds to a node ``relaxation," which may update the distance estimates of the node's neighbors. In the case of  the classic sequential variant, the order in which tasks get executed is dictated by the sequence of node distances. 
At the same time, many other incremental algorithms, such as Delaunay mesh triangulation, assume arbitrary (or random) orders on the tasks to be executed, and can even provide efficiency guarantees under such orderings~\cite{blelloch2016parallelism}. 

A significant amount of attention has been given to 
\emph{parallelizing} such incremental iterative algorithms, e.g.~\cite{gonzalez2012powergraph, Nguyen13, blelloch2016parallelism,  dhulipala17julienne, dhulipala2018theoretically}. 
One approach has been to study the dependence structure of such algorithms, proving that, in many cases, the dependency chains are \emph{shallow}. 
This can be intuitively interpreted as proving that such algorithms should have significant levels of parallelism. 
One way to exploit this fine-grained parallelism, e.g.~\cite{BFS12, shun13priority} has been to carefully split the execution into task prefixes of limited length, and to parallelize the execution of each prefix efficiently. While this approach can be  efficient, it does require an understanding of the workload and task structure, and may not be immediately applicable to algorithms where the task ordering is dependent on the input. 

An alternative approach has been to employ scalable data structures with only ensure \emph{relaxed priority order} to schedule task-based programs.
The idea can be traced back to Karp and Zhang~\cite{KarZha93}, who studied parallel backtracking in the PRAM model, and noticed that, in some cases, the scheduler can relax the strict order induced by the sequential algorithm, allowing tasks to be processed speculatively ahead of their dependencies, without loss of correctness. 
For instance, when parallelizing SSSP, e.g.~\cite{SprayList, making-kjell-happy, Nguyen13}, the scheduler may retrieve vertices in arbitrary order without breaking correctness, as the distance at each vertex is guaranteed to eventually converge to the minimum.  
However, there is intuitively a trade-off between the performance gains arising from using scalable relaxed schedulers, and the loss of determinism and the possible wasted work due to having to re-execute speculative tasks. 

This approach is quite popular in practice, as several efficient relaxed schedulers have been proposed~\cite{LotanShavit, Basin11, klsm, SprayList, Haas, Nguyen13, MQ, AKLN17, sagonas2017contention}, which can attain state-of-the-art results for graph processing and machine learning~\cite{Nguyen13, gonzalez2012powergraph}, and even have hardware implementations~\cite{Swarm}. At the same time, despite showing good empirical performance, this approach does not come with analytical bounds: in particular, for most known algorithms, it is not clear how the relaxation factor in the scheduler affects the total work performed by the parallel algorithm. 

We address this question in this paper. Roughly, we show under analytic assumptions that, for a set of fundamental algorithms including parallel Dijkstra's and Delaunay mesh triangulation, the extra work engendered by scheduler relaxation can be negligible with respect to the total number of tasks executed by the sequential algorithm. On the negative side, we show that relaxation does not come for free: we can construct worst-case instances where the cost of relaxation is asymptotically non-negligible, even for relatively benign relaxed schedulers. 

We model the relaxed execution of incremental algorithms as follows. The \emph{algorithm} is specified as an ordered sequence of tasks, which may or may not have precedence constraints. 
The algorithm's execution is modeled as an interaction between a \emph{processor}, which can execute tasks, modify state, and possibly create new tasks, and a  \emph{scheduler}, which stores the tasks in a priority order specified by the algorithm. 
At each step, the processor requests a new task from the scheduler, examines whether the task can be processed (i.e., that all precedence constraints are satisfied), and then executes the task, possibly modifying the state and inserting new tasks as a consequence. 

An exact scheduler would return tasks following priority order. Since ensuring such strict order semantics is known to lead to contention and poor performance~\cite{Alistarh14}, practical scalable schedulers often \emph{relax} the priority order in which tasks are returned, up to some constraints. 
For generality, in this paper, we assume when proving performance upper bounds that the scheduler may in fact be \emph{adversarial}---actively trying to get the algorithm to waste steps, up to some natural \emph{rank inversion} and \emph{fairness} constraints. Specifically, the two natural constraints we enforce on the scheduler are on 1) the \emph{maximum rank inversion} between the highest priority task present and the rank of the task returned, and on 2) \emph{fairness}, in terms of the maximum number of schedule steps for which the task of highest priority may remain unscheduled. For convenience, we upper bound both these quantities by a parameter $k$, which we call the \emph{relaxation factor} of the scheduler. 
Simply put, a $k$-relaxed scheduler must 1) return one of the $k$ highest-priority elements in every step; and 2) return a task at the latest $k$ steps after it has become the highest-priority task available to the scheduler. 
We note that real schedulers enforce such constraints either deterministically~\cite{klsm} or with high probability~\cite{AKLN17, alistarhbkln18, alistarhBKN18}.

A significant limitation of the above model is that it is \emph{sequential}, as it assumes a single processor which may execute tasks. While our results will be developed in this simplified sequential model, we also discuss a parallel version of the model in Section~\ref{sec:parallel-model}. 

It is natural to ask whether incremental algorithms can still provide any guarantess on total work performed under $k-$relaxed schedulers. 
Additional work may arise due to relaxation for two reasons.  
The first is if the parallel execution enforces ordering constraints between data-dependent tasks: for instance, when executing a graph algorithm, the task corresponding to a node $u$ may need to be processed \emph{before} the task corresponding to any neighbor which has higher priority in the initial node ordering. 
A second cause for wasted work is if a task may need to be re-executed once the state is updated: this is the case when running parallel SSSP: due to relaxation, a node may be speculatively relaxed at a distance that is higher than its optimal distance from the source, leading it to be relaxed several times.  
We note that neither phenomenon may occur when the priority order is strict---since the top priority task cannot have preceding constraints nor need to be re-executed---but are inherent in parallel executions. 

A trivial upper bound on wasted work for an algorithm with total work $W$ under a $k$-relaxed scheduler would be $O(kW)$---intuitively, in the worst case the scheduler may return $k$ tasks before the top priority one, which can always be executed without violating constraints. 
The key observation we make in this work is that, because of their local dependency structure, some popular incremental algorithms will incur \emph{significantly less} overhead due to out-of-order execution. 

More precisely, for incremental algorithms, such as Delaunay mesh triangulation and sorting by insertion, we show that the expected overhead of execution via a $k$-relaxed scheduler is \\ $O(\poly{k} \log n) $, where $n$ is the number of tasks the sequential variant of the algorithm executes. 
We exploit the following properties of incremental algorithms, shown in \cite{blelloch2016parallelism}:
The probability that the task at position $j$ is dependent on the task at position $i<j$ depends only on the tasks at positions $1,2,...,i$ and $j$, and assuming a random order of tasks, this probability is upper bounded by $O(1/i)$.
While the technical details are not immediate, the argument boils down to bounding, for each top-priority task, the number of \emph{dependent} tasks which may be returned by the scheduler while the task is still in the scheduler queue. 

For  SSSP, which does not have a dependency structure but may have to re-execute tasks, we use a slightly different approach, based on $\Delta$-stepping, ~\cite{meyer2003delta}. We bound the total overhead of relaxation to $O( \poly {k} \, d_{\max} / w_{\min} )$, where $d_{\max}$ is the maximum distance between two nodes, and $w_{
    \min}$ is the minimum such distance. While this overhead may in theory be arbitrarily large, depending on the input, we note that for many graph models, this overhead is small. (For example, for Erdös-Renyi graphs with constant weights, the overhead is $O( \poly {k} \log n)$.)

It is interesting to interpret these overheads in the context of practical concurrent schedulers such as MultiQueues~\cite{MQ, gonzalez2012powergraph}, where the relaxation factor $k$ is proportional to the number of concurrent processors $p$, up to logarithmic factors. 
Since in most instances the size of the number of tasks $n$ is significantly larger than the number of available processors $p$, the overhead of relaxation can be seen to be comparatively small. This observation has been already made empirically for specific instances, e.g.~\cite{lenharth2015priority}; our analysis formalizes this observation in our model.

On the negative side, we also show that the overhead of relaxation is non-negligible in some instances. Specifically, we exhibit instances of incremental algorithms
where the overhead of relaxed execution is $\Omega(\log n)$. 
Interestingly, this lower bound does not require the scheduler to be adversarial: we show that it holds even in the case of the relatively benign MultiQueue scheduler~\cite{MQ, AKLN17}. 

\paragraph{Related Work.} 
Parallel scheduling of iterative algorithms is a vast area, so a complete survey is beyond our scope. We begin by noting that our results are not relevant to standard work-stealing schedulers~\cite{blumofe, Cilk} since such schedulers do not provide any guarantees in terms of the \emph{rank of elements removed}.\footnote{We are aware of only one previous attempt to add priorities to work-stealing schedulers~\cite{imam2015load}, using a multi-level global queue of tasks, partitioned by priority. This technique is different, and provides no work guarantees.}

An early instance a relaxed scheduler is in the work of Karp and Zhang~\cite{KarZha93}, for parallelizing backtracking in the PRAM model.  
This area has recently become very active, and several relaxed scheduler designs have been proposed, trading off relaxation and scalability, e.g.~\cite{LotanShavit, Basin11, klsm, SprayList, Haas, Nguyen13, MQ, AKLN17, sagonas2017contention}. 
In particular, software packages for graph processing~\cite{Nguyen13} and machine learning~\cite{gonzalez2012powergraph} implement such relaxed schedulers. 

Our work is  related to the line of research by Blelloch et al.~\cite{Blelloch, BFS12, blelloch2012internally, BFS14, blelloch2016parallelism}, as well as~\cite{coppersmith1987parallel, calkin1990probabilistic, FN18}, 
which examines the dependency structure of a broad set of iterative/incremental algorithms and exploit their inherent parallelism for fast implementations.  We benefit significantly from the analytic techniques introduced in this work. 

We note however some important differences between these results and our work. 
The first difference concerns the scheduling model: references such as~\cite{BFS12, BFS14,blelloch2016parallelism} assume a synchronous PRAM execution model, and focus on analyzing the maximum dependency length of algorithms under random task ordering, validating the results via concurrent implementations. 
By contrast, we employ a relaxed scheduling model, that models data processing systems based on relaxed priority schedulers, such as~\cite{Nguyen13}, and provide work bounds for such executions.
Although superficially similar, our analysis techniques are different from those of e.g.~\cite{BFS12, blelloch2016parallelism} since our focus is not on the maximum dependency depth of the algorithms, but rather on the number of local dependencies which may be exploited by the adversarial scheduler to cause wasted work. 
We emphasize that the fact that the algorithms we consider may have low dependency depth does not necessarily help, since a sequential algorithm could have low dependency depth and be inefficiently executable by a relaxed scheduler: a standard example is when the dependency depth is low (logarithmic), but each ``level" in a breadth-first traversal of the dependency graph has high fanout. This has low depth, but would lead to high speculative overheads. 
(In practice, greedy graph coloring on a dense graph would lead to such an example.) 

A second difference concerns the interaction between the scheduler and the algorithm. The scheduling mechanisms proposed in e.g.~\cite{BFS12} assume knowledge about the algorithm structure, and in particular adapt the length of the prefix of tasks which can be scheduled at any given time to the structure of the algorithm. 
In contrast, we assume a basic scheduling model, which may even be adversarial (up to constraints), and show that such schedulers, which relax priority order for increased scalability, inherently provide bounded overheads in terms of wasted work due to relaxation. 

Finally, we note that references such as~\cite{BFS12, blelloch2016parallelism} focus on algorithms which are efficient under  random orderings of the tasks. In the case of SSSP, we show that relaxed schedulers can efficiently execute algorithms which have a fixed optimal ordering. 

Another related reference is~\cite{alistarhBKN18}, in which we introduced the scheduling model used in this paper, related it to MultiQueue schedulers~\cite{MQ}, and analyzed the work complexity of some simple iterative greedy algorithms such as maximal independent set or greedy graph coloring. We note the technique introduced in this previous paper only covered a relatively limited set of iterative algorithms, where the set of tasks are defined and fixed in advance, and focused on the complexity of greedy maximal independent set (MIS) under relaxed scheduling. In contrast, here we consider more complex \emph{incremental} algorithms, in which tasks can be added and modified dynamically. Moreover, as stated, here we also cover algorithms such as SSSP, in which computation should follow a fixed sequential task ordering, as opposed to a random ordering which was the case for greedy MIS.

\section{Relaxed Schedulers: The Sequential Model}

We begin by formally introducing our sequential model of relaxed priority schedulers.   
We represent a priority scheduler as a \emph{relaxed ordered set} data structure $Q_k$, where the integer parameter $k$ is the \emph{relaxation factor}.
A relaxed priority scheduler contains $<task,priority>$ pairs and supports the following operations:

\begin{enumerate}
    \item $Q_k.Empty()$, returns \texttt{true} if $Q_k$ is empty, and \texttt{false} otherwise.
    \item $ApproxGetMin()$, returns a $<task,priority>$ pair if one is available, without deleting it.
    \item $DeleteTask(task)$, removes specified task from the scheduler. This is used to remove a task returned by $ApproxGetMin()$, if applicable. 
    \item $Insert(<task, priority>)$, inserts a new task-priority pair in $Q_k$.
\end{enumerate}

We denote the rank (in $Q_k$) of the task returned by the $t$-th $ApproxGetMin()$ operation by $rank(t)$, and call it the rank of a task returned on step $t$. 
For a task $u$, let $inv(u)$ be the number of inversions experienced by task $u$ between the step when $u$ becomes the highest priority task in $Q_k$ and the step when task $u$ is returned by the scheduler.
That is, $inv(u)+1$ is the number of $ApproxGetMin()$ operations needed for the highest priority task $u$ to be scheduled.

\paragraph{Rank and Fairness Properties.} 
The relaxed priority schedulers $Q_k$ we consider will enforce the following properties:

\begin{enumerate}
    \item $RankBound$: at any time step $t$, $rank(t) \le k$.
    \item $Fairness$: for any task $u$, $inv(u) \le k-1$. 
\end{enumerate}

Priority schedulers such as k-LSM~\cite{klsm} enforce these properties deterministically, where $k$ is a tunable parameter. 
We have shown in previous work that the MultiQueue~\cite{MQ} scheduler ensures these properties both in sequential and concurrent executions~\cite{AKLN17, alistarhbkln18} with parameter $k = O( q \log q )$, with exponentially low failure probability in $q$, the number of queues.

Next, we describe how incremental algorithms can be implemented in this context.

\section{Incremental Algorithms} 

\subsection{General Definitions}

We assume a model of incremental algorithms which execute a set of tasks iteratively, one by one, and where each task incrementally  updates the algorithm's state. 
For example, in incremental graph algorithms, 
the shared state corresponds to a data structure storing the graph nodes, edges, and meta-data corresponding to nodes. 
Tasks usually correspond to vertex operations, and are usually inserted and executed in some order, given by the input. 
If this task order is random, we say that the incremental algorithm is randomized. We will consider both randomized incremental algorithms, where each task has a priority based on the random order, and deterministic ones, where the order is fixed. Using an exact scheduler corresponds to executing tasks in the same order as the sequential algorithm, while using a relaxed scheduler allows
out-of-order execution of tasks. 

\iffalse 
In the concurrent setting, relaxed schedulers is more scalable than the exact one. On the other hand, not all the tasks can be executed concurrently, because some tasks can be executed only after other, higher priority tasks finish their computation. This means that in concurrent setting schedulers might cause algorithm to perform extra computation which is also called extra work, if they return task which can not be processed at the moment. We are interested in comparing extra work incurred by schedulers. In order to do this, we consider sequential versions of schedulers and compute how much more work does algorithm perform if relaxed scheduler is used.
\fi

\paragraph{Definition.} More formally, randomized incremental algorithms such as Delaunay triangulation and comparison sorting with via BST insertion can be modelled as follows:

We are given $n$ tasks, which must be executed iteratively in some (possibly random) order.
Initially, each task $u$ is assigned a unique label $\ell(u)$. For instance, this label can be based on a random permutation of $n$ given tasks,  $\pi$. 
That is, for task $u$, $\ell(u)=i$, iff $\pi(i)=u$. 
A lower label can be equated with higher priority.
Each task performs some computation and updates the algorithm state. In the case of Delaunay triangulation, tasks update the triangle mesh, while in the case of Comparison Sorting tasks modify the BST accordingly.
Generic sequential pseudocode is given in Algorithm \ref{IncrementalSequential}. We note that a similar generic algorithm was presented in~\cite{alistarhBKN18} for parallelizing greedy iterative algorithms. 

\begin{algorithm} 
\caption{General Framework for incremental algorithms. }
\label{IncrementalSequential}
    \KwData{Sequence of tasks $V=(v_1, v_2, ..., v_n)$, in decreasing priority order. }
    \CommentSty{// $Q$ is an exact priority queue.} \\    
    $Q \gets \text{tasks of $V$ with priorities}$ \\
    \For{each step $t$}
    {
        \CommentSty{// remove the task with highest priority. } \\
        $v_t \gets Q.DeleteMin()$  \\  
        $Process(v_t)$ \\
        \CommentSty{// stop if $Q$ is empty. } \\
        \If{$Q.empty()$} { \Break }
    }
\end{algorithm}

When using a relaxed priority $Q_k$ instead of an exact priority queue $Q$, one issue is the presence of inter-task dependencies. 
These dependencies are specified by the algorithm, and are affected by the permutation of the tasks:
For comparison sorting, a task depends on all of its ancestor tasks in the resulting BST, while  
for Delaunay Triangulation there is a dependency between two tasks if right before either one is added,
their encroaching regions overlap by at least an edge in the mesh. (Due to space constraints, we will assume the reader is familiar with terminology related to Delaunay mesh triangulation. We direct the reader to e.g.~\cite{blelloch2016parallelism} for an overview of sequential and parallel algorithms for this problem.) 

If task $v$ depends on task $u$ and $\ell(u) < \ell(v)$, then task $v$ can not be processed before  task $u$. 
We call task $u$ an \emph{ancestor} of task $v$ in this case.
We assume that the task returned by the relaxed scheduler can be processed only if all of its ancestors are 
already processed.
Pseudocode is given in Algorithm \ref{IncrementalRelaxed}.

\begin{algorithm} 
\caption{General Framework for executing incremental algorithms using relaxed priority schedulers. }
\label{IncrementalRelaxed}
    \KwData{Sequence of tasks $V=(v_1, v_2, ..., v_n)$, in decreasing priority order. \\ }
    \CommentSty{// $Q_k$ is a relaxed priority queue.} \\    
    $Q_k \gets \text{tasks of $V$ with priorities}$ 
    \For{each step $t$}
    {
        \CommentSty{// get the task with highest priority from $Q_k$. } \\
        $v_t \gets Q_k.GetMin()$  \\  
        \CommentSty{// check if $v_t$ has no dependencies. } \\
        \If{$CheckDependencies(v_t)$} {
            $Q_k.Delete(v_t)$ \\
            $Process(v_t)$ \\
        }
        \CommentSty{// stop if $Q$ is empty. } \\
        \If{$Q_k.empty()$} { \Break }
    }
\end{algorithm}

Observe that the $For$ loop runs for exactly $n$ steps in the exact case, but it may require extra steps in the relaxed case. 
We are interested in upper bounding the number of extra steps, since this is a measure of the additional work incurred when executing via the relaxed priority queue.
In order to do this, we need to specify some properties for the dependencies of the incremental algorithms we consider.

Denote by $p_{ij}$ be probability that task with label $j$ depends on task with label $i$.
We require the incremental algorithms to have the following properties: 
\begin{enumerate}
    \item for each pair of task indices $i<j$, $p_{ij} \le C/i$, where $C$ is large enough constant which depends on the incremental algorithm.
    \item for each pair $i<j$, $p_{ij}$ depends only on tasks with labels $1,..., i$ and $j$.
\end{enumerate}

The fact that comparison sorting and Delaunay triangulation have the above properties has already been shown in~\cite{blelloch2016parallelism}. More precisely, for comparison sorting, these properties are proved in \cite[Section 3]{blelloch2016parallelism}.
In the case of Delaunay triangulation, property (2) is showed in the same paper~\cite[Section 4]{blelloch2016parallelism}, while property (1) follows from~\cite[Theorem 4.2]{blelloch2016parallelism}.

\subsection{Analysis}

In this subsection, we prove an upper bound on the number of extra steps required by our generic relaxed framework for executing incremental algorithms. As a first step, we will derive some additional properties of the relaxed scheduler. 

Let $A_{ij}$ be the event that task with label at least $j$ is returned by the scheduler before task
with label $i$ is processed by the incremental algorithm. 
Observe that if the scheduler returns the highest priority task, then this task can always be processed by the incremental algorithm, since this task is guaranteed to have no ancestors.
\begin{lemma} \label{lem:Aij}
If $j-i \ge 2k^2$, $Pr[A_{ij}]=0.$ 
\end{lemma}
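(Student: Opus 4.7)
The plan is to argue by contradiction. Suppose at some step $\tau < T_i$, while task $i$ is still in $Q_k$, the scheduler returns a task with label $m \ge i + 2k^2$. I will derive two incompatible bounds on $L_\tau$, the number of tasks with label strictly greater than $i$ that have been processed by time $\tau - 1$, and show these bounds cannot coexist.

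The first bound (a lower bound) will come directly from RankBound. Since $i$ is still in $Q_k$ and has label smaller than $m$, the constraint $rank(\tau) \le k$ leaves at most $k - 2$ labels of the open interval $(i, m)$ inside $Q_k$ at time $\tau$; because all $n$ tasks are initially inserted, every label absent from $Q_k$ has been processed, which gives $L_\tau \ge (m - i - 1) - (k - 2) \ge 2k^2 - k + 1$.

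The second bound (an upper bound) is the core of the argument and uses both scheduler properties. I would introduce $A(t)$, the number of labels in $\{1, \ldots, i\}$ still present in $Q_k$ at time $t$; this quantity starts at $i$ and is monotonically non-increasing. While $A(t) > k$, the $k$-th smallest element of $Q_k$ has label at most $i$, so RankBound forbids the scheduler from ever returning (and thus processing) a label $> i$ during this initial phase. Once $A(t)$ first drops to $k$, at most $k - 1$ labels strictly smaller than $i$ remain, and these are the only labels smaller than $i$ that can still become the minimum. By Fairness, each such minimum is scheduled within $k - 1$ other returns, yielding at most $(k-1)^2$ non-minimum returns in this second phase, so at most $(k-1)^2$ processings of labels $> i$ are possible before $i$ itself becomes the minimum at some time $s_i$. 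Applying Fairness one last time to $i$ permits at most $k - 1$ further returns in $[s_i, T_i)$, contributing at most $k - 1$ additional processings of labels $> i$. Summing gives $L_\tau \le (k-1)^2 + (k-1) = k^2 - k$.

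Combining the two bounds yields $2k^2 - k + 1 \le k^2 - k$, hence $k^2 + 1 \le 0$, the desired contradiction. The hard part will be the middle-phase accounting: even though the scheduler may process arbitrarily many labels smaller than $i$ out of order over the full run, once $A$ drops to $k$ only $k - 1$ of them remain, which caps both the number of later minima and the number of non-minimum returns by $(k-1)^2$—a bound that, crucially, does not depend on $i$ and is what drives the $O(k^2)$ guarantee.
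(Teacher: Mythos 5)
Your proof is correct, and it rests on exactly the two scheduler properties the paper uses, but the bookkeeping is genuinely different. The paper argues directly about the far task $v$ with label at least $j$: before the rank of task $i$ first drops to at most $k$, RankBound prevents any larger label from being returned, so at that moment $rank(v) > j-i \ge 2k^2$; Fairness then forces task $i$ to be processed within $k^2$ further steps, and since a rank can decrease by at most one per step, $v$ still has rank above $k$ when $i$ is processed. You instead run a counting contradiction: at the step where a label $m \ge i+2k^2$ is returned, RankBound (together with the fact that in this framework all tasks start in $Q_k$ and are deleted only when processed) forces at least $2k^2-k+1$ labels from $(i,m)$ to have been processed already, while your three-phase RankBound/Fairness tenure argument caps the number of labels greater than $i$ processed before task $i$ at $(k-1)^2+(k-1)=k(k-1)$. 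That second bound is essentially the paper's Lemma~\ref{lem:Rij} (proved separately there with the looser constant $k^2$), so your argument folds the two lemmas into one; a side benefit is that the inequality $m-i-k+1 > k(k-1)$ already gives a contradiction whenever $j-i \ge k^2$, so your route proves the statement with a better threshold than $2k^2$ and shows the paper's constant is not tight. What the paper's rank-tracking buys in exchange is brevity and reuse: its ``processed within $k^2$ steps of rank dropping to $k$'' step is applied verbatim again in Lemma~\ref{lem:Rij}, whereas your version needs the explicit (and, in this model, valid) observation that every label absent from $Q_k$ has been processed.
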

\begin{proof}
For labels $j-i \ge 2k^2$,
let $u$ be the task with label $i$ and let $v$ be some task with label at least $j$.
Also, let $t$ be the earliest step at which rank of $u$ is at most $k$.
This means that at time step $t-1$, $rank(u) > k$ and by rank property no tasks with labels larger than $i$ were scheduled at time steps $1,2,...,t-1$.
Thus, we have that at time step $t$, $rank(v) > j-i \ge 2k^2$.
Because of the fairness property it takes $k$ steps to remove the task with highest priority(lowest label), so task $u$ will be returned by the scheduler and subsequently will be processed by the algorithm no later than at time step $t+k^2$.
Rank of $v$ can decrease by at most 1 after each step, thus at time step $t+k^2$,
$rank(v) > 2k^2-k^2 \ge k$. Hence, $v$ can be returned by the scheduler only after time step $t+k^2$ and this gives us that $Pr[A_{ij}]=0.$
\end{proof}
For any label $i$, let $R_{i}$ be the number of times scheduler returns 
task with label greater than $i$ (some task can be counted multiple times), before task with label $i$ is processed by the algorithm. The following holds:
\begin{lemma} \label{lem:Rij}
For any $i$, $R_{i} \le k^2.$
\end{lemma}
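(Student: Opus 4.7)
The plan is to combine Lemma~\ref{lem:Aij} with the $RankBound$ property to localize, in time, the window during which any task with label greater than $i$ could possibly be returned before task $u$ (with $\ell(u)=i$) is processed, and then to bound the number of scheduler calls inside that window by $k^2$.

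First I would observe that as long as $rank(u) > k$ in $Q_k$, no task $v$ with $\ell(v) > i$ can be returned by the scheduler: since a lower label corresponds to higher priority, any such $v$ is ranked strictly below $u$ and therefore also has rank greater than $k$, which by $RankBound$ prevents it from being returned. Consequently every contribution to $R_i$ must come from scheduler steps at or after the first step $t$ at which $rank(u) \le k$.

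Next I would reuse the timing argument already established inside the proof of Lemma~\ref{lem:Aij}: once $rank(u) \le k$ at step $t$, there are at most $k-1$ tasks of strictly higher priority than $u$, and by $Fairness$ each of them, once it becomes the highest-priority task, is removed within $k$ steps. Hence $u$ itself is processed no later than step $t+k^2$. Thus the window of scheduler steps during which any label $>i$ can possibly be returned before $u$ is processed contains at most $k^2$ $ApproxGetMin()$ operations, and since each such operation returns at most one task, we conclude $R_i \le k^2$.

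The only mildly delicate step is making sure the $k^2$ bound is applied to counted returns (which may include repeats of the same task that failed $CheckDependencies$) rather than to distinct tasks. This causes no difficulty because $R_i$ is defined to count scheduler returns with multiplicity, and the window-length bound of $k^2$ caps scheduler operations, not distinct tasks, so repeats are handled automatically.
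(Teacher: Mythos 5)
Your proof is correct and follows essentially the same route as the paper's: localize all possible returns of labels greater than $i$ to the window starting at the first step where $rank(u)\le k$, use $Fairness$ applied to the at most $k-1$ higher-priority tasks to bound the window length by $k^2$ steps, and note that each $ApproxGetMin()$ contributes at most one counted return. Your explicit remark that the bound counts returns with multiplicity (not distinct tasks) matches the intended reading of $R_i$ in the paper.
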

\begin{proof}
Let $u$ be the task with label $i$.
Also, let $t$ be the earliest step at which rank of $u$ is at most $k$.

This means that at time step $t-1$, $rank(u) > k$ and by rank property no task with label at least $i$ can be returned by the scheduler at time steps $1,2,...,t-1$.
Because of the fairness property it takes $k$ steps to remove the task with highest priority(lowest label), so task $u$ will be returned by the scheduler and subsequently will be processed by the algorithm no later than at time step $t+k^2$.
Trivially, the total number of times some task with label at least $i$(or in fact any label) can be returned by the scheduler before the time step $t+k^2$ is $k^2$.
\end{proof}
With the above lemmas in place we can proceed to prove an upper bound for the extra number of steps.

\begin{theorem} \label{SeqTheorem}
The expected number of extra steps is upper bounded by $O(poly(k)\log{n})$.
\end{theorem}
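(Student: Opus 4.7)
The plan is to set up a double counting scheme that charges each extra step to a pair $(i,j)$ representing an unprocessed ancestor $i$ of the returned task $j$, and then combine the two lemmas with property (1) to bound the expected total.

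First I would define, for each ordered pair $i < j$, the random variable $N_{ij}$ equal to $\mathbb{1}[i \text{ is an ancestor of } j]$ times the number of time steps $t$ on which $v_t = j$ is returned by the scheduler while task $i$ has not yet been processed. The key observation is that on any extra step, the returned task has at least one unprocessed ancestor, so summing over all unprocessed ancestors overcounts the number of extra steps; formally, the total number of extra steps is bounded by $\sum_{i<j} N_{ij}$.

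Next I would estimate each $N_{ij}$ using the two lemmas. Lemma \ref{lem:Aij} tells us that whenever $j-i \ge 2k^2$, no task of label at least $j$ is returned before $i$ is processed, so $N_{ij}=0$ in that regime and only pairs with $j - i < 2k^2$ contribute. For the surviving pairs, every time $j$ is returned before $i$ is processed counts as one return of a task of label greater than $i$ before $i$ is processed, so Lemma \ref{lem:Rij} yields the deterministic pointwise bound $N_{ij} \le k^2 \cdot \mathbb{1}[i \text{ is ancestor of } j]$. Taking expectation over the random permutation and using property (1), $\mathbb{E}[N_{ij}] \le k^2 p_{ij} \le Ck^2/i$.

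Finally I would sum: for each $i$ there are at most $2k^2-1$ values of $j$ with $j-i < 2k^2$, so
\[
\mathbb{E}\Bigl[\sum_{i<j} N_{ij}\Bigr] \;\le\; \sum_{i=1}^{n} (2k^2-1)\cdot \frac{Ck^2}{i} \;=\; O\bigl(k^4 \log n\bigr),
\]
which is $O(\poly(k)\log n)$, as claimed.

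The main obstacle I anticipate is justifying that the pointwise bound $N_{ij} \le k^2 \mathbb{1}[i \text{ is ancestor of } j]$ interacts cleanly with the adversarial scheduler when taking expectations: the scheduler's choices may depend on the realized permutation and on which ancestor relations hold. This is resolved by noting that Lemmas \ref{lem:Aij} and \ref{lem:Rij} are worst-case bounds that hold for every realization and every admissible scheduler trace, so the inequality $N_{ij} \le k^2 \mathbb{1}[i \text{ ancestor of } j]$ holds pathwise, and taking expectation over the random labels then reduces directly to the marginal probability $p_{ij}$ appearing in property (1). A minor secondary subtlety is that overcounting in the $\sum_{i<j} N_{ij}$ bound is loose (a step with multiple unprocessed ancestors is counted several times), but since we only need an upper bound of the form $O(\poly(k)\log n)$, this looseness is absorbed into the $\poly(k)$ factor.
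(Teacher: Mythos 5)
Your proposal is correct, and it reaches the paper's bound $O(k^4\log n)$ by the same skeleton: charge extra steps to dependent pairs $(i,j)$, use Lemma \ref{lem:Aij} to discard pairs with $j-i\ge 2k^2$, use Lemma \ref{lem:Rij} to cap the charges per pair by $k^2$, and sum $C/i$ over the $O(k^2)$ surviving values of $j$ to get a harmonic series. The bookkeeping differs in two ways worth noting. First, the paper does not charge a blocked step to an arbitrary unprocessed ancestor: it recurses down the dependency chain to a pair $(u',u'')$ in which all ancestors of $u'$ are already processed, and charges that pair; your direct charge to any unprocessed ancestor of the returned task is a legitimate overcount and suffices for the upper bound. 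Second, and more substantively, the paper bounds $\Pr[L_{ij}]\le \Pr[D_{ij}]\Pr[A_{ij}]$ by asserting independence of the dependency event $D_{ij}$ and the scheduler event $A_{ij}$, which is the one delicate step given an adversarial scheduler whose choices may depend on the realized permutation. Your pathwise counting variable $N_{ij}\le k^2\cdot[\text{$i$ is an ancestor of $j$}]$, with Lemma \ref{lem:Aij} forcing $N_{ij}=0$ deterministically when $j-i\ge 2k^2$, sidesteps that independence claim entirely: linearity of expectation plus $p_{ij}\le C/i$ is all that is needed. So your route is marginally more elementary and arguably more robust against scheduler--permutation correlations, at the cost of a slightly looser (but asymptotically identical) count when a returned task has several unprocessed ancestors.
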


\begin{proof}
Let $D_{ij}$ be the event that task with label $j$ depends on task with label $i<j$.
From the properties of incremental algorithms we consider, we get that $Pr[D_{ij}]=p_{ij}\le C/i$.

Recall that for $i<j$, $A_{ij}$ is the event that task with label at least $j$ is returned by relaxed scheduler before task with label $i$ is processed by the algorithm. Observe that at $A_{ij}$ and  $D_{ij}$ are independent.
Since, $D_{ij}$ depends only on the initial priorities of the tasks and does not depend on the relaxed scheduler.
On the other hand, it is easy to see that $Pr[A_{ij}|D_{ij}]=Pr[A_{ij}|\neg D_{ij}].$

Every extra step is caused by a task with an ancestor which is not processed.
Let $v$ be the label of the task we are not able to process because of dependencies and let $u$ be the highest priority ancestor task of $v$. If $u$ also has an unprocessed ancestor , we repeat the same step.
Eventually we can recurse to the pair of tasks $(u',u'')$ such that $u'$ is highest priority ancestor of $u''$ and all ancestors of $u'$ are already processed.
Let $\ell(u')=i$ and $\ell(u'')=j$, we charge the extra step to the pair of labels $(i,j)$.

Note that pair of labels $(i,j)$ can be charged only if $D_{ij}$ and $A_{ij}$.
Let $L_{ij}$ be the event that $(i,j)$ is charged at least once. That is, $L_{ij}$ will happen if and only if $D_{ij}$ and $A_{ij}$ happen.
Also it is easy to see that the total number of times $(i,j)$ can be charged is upper bounded by $R_{i}$(Recall \ref{lem:Rij}).

\begin{align}  \label{TotalCharges}
E[\# extra steps] &\le \sum_{i=1}^{n-1} \sum_{j=i+1}^{n} Pr[L_{ij}]R_{i} \nonumber \\
&\underset{\le}{\text{Lemma }\ref{lem:Rij}} \sum_{i=1}^{n-1} \sum_{j=i+1}^{n} Pr[D_{ij}]Pr[A_{ij}]k^2 \nonumber \\ & \le
\sum_{i=1}^{n-1} \sum_{j=i+1}^{n} \frac{C}{i}Pr[A_{ij}]k^2 \nonumber \\ &\le
\sum_{i=1}^{n-1} \sum_{j=i+1}^{i+2k^2} \frac{C}{i}Pr[A_{ij}]k^2 \nonumber  +
\sum_{i=1}^{n-1} \sum_{j=i+2k^2+1}^{n} \frac{C}{i}Pr[A_{ij}]k^2 \nonumber \\ 
&\underset{=}{\text{Lemma }\ref{lem:Aij}} 
\sum_{i=1}^{n-1} \sum_{j=i+1}^{i+2k^2} \frac{C}{i}Pr[A_{ij}]k^2 \nonumber \\ &\le
\sum_{i=1}^{n-1}\frac{C}{i} 2k^4 \le O(k^4 \log{n}). 
\end{align}
\end{proof}

\section{RELAXED SCHEDULERS: THE TRANSACTIONAL MODEL} \label{sec:parallel-model}

We now consider an alternative model where tasks are executed \emph{concurrently}, each as part of a (software or hardware) transaction. This is unlike our standard model, which is entirely sequential. It is important to note that the correspondence between the two models is not one-to-one, since in this concurrent model transactions may abort due to data conflicts. 
More precisely, we assume that the algorithm consists of $n$ tasks, each corresponding to some transaction.
Transactions are scheduled by an entity called the the \emph{transactional scheduler}.
Every task $u$ has label $\ell(u)$, where a lower label corresponds to higher priority.
In transactional model, unlike sequential model, we assume that transaction aborts if and only if it is executed concurrently with a transaction it depends on.
In other words, dependencies create data conflicts for concurrent transactions and conflicts are resolved 
in favor of higher priority transaction.
Another crucial difference is that in transactional model we assume an upper bound
on the interval contention. That is, each transaction can be concurrent with at most $C$ transactions 
in total(during one execution). This is needed because, If $u$ is the transaction with highest priority and $v$ is the transaction with second highest priority, which depends on $u$, then $u$ can cause 
$v$ to be aborted large number of times, even in the case of exact scheduler.

\paragraph{Properties of the transactional scheduler} 
For transaction $u$, let $inv(u)$ be the number of transactions returned by \emph{transactional scheduler} after the point
$u$ becomes the highest priority transaction available to the scheduler and before it is returned by the scheduler.
We require \emph{transactional scheduler} to have the following properties, which are similar to
the properties in sequential model:
\begin{enumerate}
    \item \emph{RankBound}: transaction $u$ with label $\ell(u)$ is available to the \emph{transactional scheduler}
    only after at least $\ell(u)-k$ transactions with higher priority than $u$ are executed successfully.
    \item \emph{Fairness}: For any transaction $u$, $inv(u) \le k-1$.
\end{enumerate}

Next, we derive concurrent versions of lemmas proved in the sequential setting.
Let $A_{ij}$ be the event that transaction with label at least $j$ is executed concurrently with the transaction with label $i<j$ or returned by the \emph{transactional scheduler} before the transaction with label $i$.
Observe that if scheduler returns highest priority transaction, then this transaction will never abort.

\begin{lemma}
If $j-i\ge 2k(C+k)$, $Pr[A_{ij}]=0$.
\end{lemma}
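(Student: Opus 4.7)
The plan is to mirror the proof of the sequential Lemma \ref{lem:Aij}, recasting ``steps'' as ``scheduler returns'' and using the interval contention bound $C$ in place of the constants that the sequential argument gets for free. Since in the transactional model a transaction can be scheduled, aborted, re-inserted and rescheduled up to $O(C)$ times before succeeding, the budget needed to ``clear'' each higher-priority obstacle grows from $k$ (the Fairness bound) to roughly $C+k$, which is precisely why the lemma asks for a gap of $2k(C+k)$ rather than the sequential $2k^2$.

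Concretely, I would assume for contradiction that $A_{ij}$ occurs, witnessed by a transaction $v$ with $\ell(v) \ge j$, and let $t^*$ be the first scheduler return at which $v$ is returned. By the transactional RankBound, at least $j-k$ transactions with labels in $\{1,\ldots,\ell(v)-1\}$ have already been successfully executed by $t^*$. If $u$ (label $i$) is among them, then $u$ has completed before $v$'s execution even begins, which rules out both disjuncts of $A_{ij}$; so assume otherwise. Then at least $j-k-(i-1) \ge 2k(C+k)-k+1$ of the successful completions prior to $t^*$ have labels strictly between $i$ and $\ell(v)$. The core of the proof is a Fairness-based accounting: once $u$ first becomes available (which happens as soon as $\ge i-k$ label-$<i$ transactions finish), there are at most $k-1$ pending transactions with label $<i$; each of them can re-enter the scheduler at most $C+1$ times (interval contention caps its aborts), and, once it is highest-priority available, Fairness caps its wait at $k-1$ further returns. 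Summing gives that $u$ becomes highest-priority available after $O(k(C+k))$ returns from the time it becomes available, that Fairness then triggers its own return within $k-1$ more returns, and that at most $C$ additional aborts of $u$ itself can occur before its successful execution. Altogether, $u$ is successfully completed within fewer than $2k(C+k)$ returns from its availability, which is strictly less than the $\ge j-i-k+1$ returns that would be required to flush through that many label-$(i,\ell(v))$ completions, the desired contradiction.

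The main technical obstacle is exactly this bookkeeping of returns versus completions in the presence of concurrent aborts. Two distinct phenomena need attention: (i)~a lower-label transaction can cycle in and out of the queue several times, temporarily revoking $u$'s highest-priority-available status, so the set of transactions competing with $u$ is not monotone in time, and (ii)~even after $u$ has been returned, $u$'s own execution may be aborted by still-higher-priority transactions that re-enter the queue after an abort of their own. The interval-contention parameter $C$ is the essential ingredient taming both phenomena, since it caps how often any single transaction can re-enter the scheduler, and thus turns the ``roughly $k$'' factor of the sequential argument into a ``roughly $C+k$'' factor here. Finally, once $u$ has successfully completed before $t^*$, the ``concurrent'' half of $A_{ij}$ is automatically excluded: $v$'s first execution begins at $t^*$, strictly after $u$'s successful completion, so no execution interval of $v$ can overlap with $u$.
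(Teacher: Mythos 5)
Your overall strategy is the paper's: start from the first moment the label-$i$ transaction $u$ becomes available, use RankBound to argue that at most $k-1$ higher-priority transactions remain, then use Fairness plus the interval-contention bound $C$ to cap what can happen before $u$ completes successfully, and conclude that a transaction with label at least $i+2k(C+k)$ cannot yet interfere. The only real difference is presentational (you argue by contradiction, counting scheduler returns against the $\ge j-k-(i-1)$ successful completions that RankBound forces before $v$'s first return, while the paper counts the transactions running in the window and concludes that $v$ is not yet available when $u$ completes). However, the quantitative core does not close as written. Your budget of ``fewer than $2k(C+k)$ returns'' between $u$'s availability and its successful completion is not justified by the accounting you give: if each of the up to $k-1$ pending higher-priority transactions may be scheduled up to $C+1$ times, and each such attempt may be preceded by up to $k-1$ Fairness inversions, the literal sum is $\Theta(k^2C)$, not $O(k(C+k))$. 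The paper obtains the tight bound $k(C+k)$ by using the observation stated just before the lemma that the highest-priority available transaction never aborts: the window then decomposes into at most $k$ phases (the at most $k-1$ pending higher-priority transactions plus $u$ itself), and each phase costs at most $k-1$ inversions plus at most $C$ transactions concurrent with the single, successful attempt. Your version, which instead grants every transaction, including $u$, up to $C$ aborts each with a fresh Fairness wait, forfeits exactly this saving.

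Second, even granting your budget, the final inequality fails for $k\ge 2$: you derive at least $j-i-k+1 \ge 2k(C+k)-k+1$ completions (hence at least that many returns) inside the window, and ``fewer than $2k(C+k)$ returns'' is perfectly compatible with that, since $2k(C+k)-k+1 \le 2k(C+k)-1$. With the bound $k(C+k)$ the comparison does close, because $k(C+k) < 2k(C+k)-k+1$. So the fix is to tighten the accounting to $k(C+k)$ via the no-abort observation, or equivalently to follow the paper and argue directly that at the moment $u$ completes at most $\ell(u)+k(C+k) < \ell(v)-k$ transactions have succeeded, so by RankBound $v$ is not yet available and can be neither returned before $u$ nor concurrent with it. A smaller point you leave implicit (as does the paper) is that no transaction with label greater than $i$ can be available, hence returned or completed, before $u$ first becomes available; this is what entitles you to place all the label-$(i,\ell(v))$ completions inside the window you budget.
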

\begin{proof}
Let $u$ be the transaction with label $i$ and let $v$ be a transaction with label at least $j$.
Consider first point when $u$ is available to the scheduler.
Observe that at this point, no transactions with label greater than
$i$ are available to the scheduler and by the $RankBound$ property,
there are at most $k-1$ transactions with higher priority than $u$ which are left to be processed.
By $Fairness$ property, there can be only $k-1$ transactions scheduled before the transaction with highest priority. Once the highest priority transaction is scheduled, there can be at most $C$ transactions
executing concurrently with it.
Thus, the total number of transactions which were running at some point during period between $u$ became available to the scheduler and was executed successfully is at most $k(C+k)$.
We get that at the point $u$ has finished successful execution, $v$ is not available to the scheduler, since the total number of successful transactions
is at most $\ell(u)+k(C+k) < \ell(v)-k$. Thus, $Pr[A_{ij}=0]$.
\end{proof}

Let $u$ be the task with label $i$. Let $R_i$ be the total number of times scheduler returns transaction with label greater than $i$ before it returns the transaction $u$, plus the number of transaction which are concurrent with $u$ at some point.

\begin{lemma}
For any transaction $u$ with $\ell(u)=i$, $R_i \le k(k+C)$.
\end{lemma}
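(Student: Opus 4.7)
My plan is to mimic the bookkeeping used in the previous lemma almost verbatim, since $R_i$ is essentially tallying transactional activity inside a carefully chosen window surrounding the lifetime of $u$. Let $u$ be the transaction with $\ell(u)=i$, let $t$ be the earliest step at which $u$ is available to the scheduler, and let $s$ be the step at which $u$ is returned. The first goal is to argue that every contribution to $R_i$ lives inside the window $[t,s]$. By \emph{RankBound}, at step $t$ there are at most $k-1$ unfinished transactions with label strictly less than $i$ and, crucially, no transaction with label strictly greater than $i$ is yet available to the scheduler (such a transaction would need strictly more higher-priority transactions to have finished successfully than $u$ itself). So every return of a transaction with label $>i$ counted in the first summand of $R_i$ must occur in $[t,s]$, and every transaction concurrent with $u$ also runs inside $[t,s]$, since $u$ itself only runs during that window.

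Second, I would partition $[t,s]$ into epochs according to which transaction is currently of highest priority available to the scheduler. There are at most $k$ such epochs: one for each of the $\le k-1$ unfinished higher-priority transactions present at step $t$, plus one final epoch during which $u$ itself is the top-priority transaction. Within a single epoch, \emph{Fairness} permits the scheduler to return at most $k-1$ other transactions before the current highest-priority one is scheduled, and the interval-contention hypothesis caps the number of transactions running concurrently with that top-priority transaction by $C$. Each epoch therefore accounts for at most $(k-1)+1+C = k+C$ active transactions.

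Summing across the $\le k$ epochs yields at most $k(k+C)$ units of activity during $[t,s]$, and since every contribution to either summand of $R_i$ maps to such a unit of activity, the inequality $R_i \le k(k+C)$ follows. The main obstacle I anticipate is purely bookkeeping: ensuring that the epoch count really is $k$ and not $k+1$ once one accounts for the possibility that $u$ may itself be temporarily scheduled in an earlier epoch, and verifying that the mapping from contributions-of-$R_i$ to units-of-activity is a genuine injection (so that the two summands cannot jointly overcount the window's activity). Both points should follow cleanly from \emph{RankBound} together with the no-label-$>i$-available-at-$t$ observation above, without any new idea beyond the one already used in the preceding lemma.
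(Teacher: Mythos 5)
Your argument is essentially the paper's own proof, spelled out in more detail: the paper likewise bounds the activity in the window from the moment $u$ becomes available until it completes by noting that at most $k-1$ higher-priority transactions remain, and that each of these (plus $u$) accounts for at most $k-1$ fairness-induced returns, itself, and $C$ concurrent transactions, giving $k(k+C)$. The only cosmetic difference is that the paper ends the window at $u$'s successful execution rather than at its return, but your final epoch (with $u$ as top priority, contributing its $C$ concurrent transactions) covers exactly the same quantity, so the proofs coincide.
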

\begin{proof}
As in the proof of the previous lemma, we can show that the total number of transactions which were running at some point during period between $u$ became available to the scheduler and was executed successfully is at most $k(C+k)$,  this trivially gives us that $R_i \le k(k+C)$.
\end{proof}

Now, we are ready to prove the following theorem:

\begin{theorem} 
The expected number of transactions aborted by an incremental algorithm is at most $O(k^2(C+k)^2\log{n})$
\end{theorem}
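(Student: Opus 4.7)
The plan is to lift the sequential charging argument from Theorem \ref{SeqTheorem} to the transactional model, with the two lemmas just proved playing the roles of Lemmas \ref{lem:Aij} and \ref{lem:Rij}. Let $D_{ij}$ again be the event that the task with label $j$ depends on the task with label $i<j$; by property (1) of incremental algorithms, $Pr[D_{ij}] \le C/i$, and by property (2) together with the fact that the transactional scheduler acts on labels (not on the identities of tasks assigned to those labels), the events $A_{ij}$ and $D_{ij}$ are independent, exactly as in the sequential proof.

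Next I would set up the charging. By the conflict-resolution rule, whenever a transaction $v$ aborts, it must have been running concurrently with some higher-priority transaction $u$ on which it depends. Starting from the pair $(u,v)$ and walking up the dependency chain to the highest-priority still-unprocessed ancestor, I obtain a pair $(u',u'')$ of labels $(i,j)$ such that every ancestor of $u'$ has already been successfully executed; I charge the abort to $(i,j)$. Since $(i,j)$ can be charged only when both $D_{ij}$ and $A_{ij}$ occur, and since each such charge corresponds either to a scheduler return of a label $\ge i$ before $u'$ is processed or to a concurrent execution with $u'$, the number of charges to any fixed $(i,j)$ is bounded by $R_i \le k(k+C)$.

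Combining these ingredients gives
\begin{align*}
E[\#\text{aborts}] &\le \sum_{i=1}^{n-1}\sum_{j=i+1}^{n} Pr[D_{ij}]\,Pr[A_{ij}]\,R_i \\
&\le \sum_{i=1}^{n-1}\sum_{j=i+1}^{i+2k(C+k)} \frac{C}{i}\,Pr[A_{ij}]\,k(k+C) \\
&\le \sum_{i=1}^{n-1} \frac{C}{i}\cdot 2k(C+k)\cdot k(C+k) \;=\; O\bigl(k^2(C+k)^2\log n\bigr),
\end{align*}
where the second line uses the new Lemma giving $Pr[A_{ij}]=0$ for $j-i\ge 2k(C+k)$ to truncate the inner sum.

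The main obstacle, compared to the sequential case, is making the charging argument cleanly account for both \emph{being returned out of order} and \emph{being concurrent with an ancestor}: a single in-flight transaction can simultaneously conflict with several ancestors, and one ancestor can be conflicted by many concurrent descendants. The key observation that resolves this is that $R_i$ was deliberately defined in the transactional model to include both contributions, so the per-pair charge bound from the sequential proof carries over verbatim; the rest is algebra that parallels equation \eqref{TotalCharges} almost line for line, with $k^2$ replaced by $k(C+k)$ wherever it appeared.
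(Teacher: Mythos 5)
Your proposal is correct and follows essentially the same route as the paper: charge each abort to a label pair $(i,j)$ that can only be charged when $D_{ij}$ and $A_{ij}$ both hold, bound the charges per pair by $R_i \le k(k+C)$, and redo the summation of Theorem~\ref{SeqTheorem} with the transactional lemmas truncating the inner sum at $j-i < 2k(C+k)$. The only (immaterial) difference is that the paper charges an abort directly to the concurrent higher-priority transaction that caused it, which makes the per-pair bound immediate, whereas you re-run the sequential walk up the dependency chain; both yield the same $O(k^2(C+k)^2\log n)$ bound.
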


\begin{proof}
Let $D_{ij}$ be event that the transaction(task) with label $j$ depends on the transaction(task) with label $i<j$.
Note that transaction $i$ can abort transaction $j$ if and only if $A_{ij}$ and $D_{ij}$.
In transactional model, we charge the aborted transaction to the transaction which caused abort.
Each transaction can be charged at most $R_i \le k(k+C)$ times.
Observe that $R_i$ is a loose upper bound on the number of times transaction can be charged,
since charge to transaction $u$ can be caused by a concurrent transaction only.
With these properties in place, we can follow exactly the same steps as in the proof of Theorem \ref{SeqTheorem} to show that 
The expected number of transactions aborted by an incremental algorithm is at most $\sum_{i=1}^{n-1} \sum_{j=i+1}^{n} Pr[A_{ij} \text{ and } D_{ij}]R_{i}=O(k^2(C+k)^2\log{n})$.
\end{proof}

\section{Lower Bound on Wasted Work}

In this section, we prove the lower bound on the cost of relaxation in terms of additional work. 
We emphasize the fact that this argument does not require the scheduler to be adversarial: in fact, we will prove that a fairly benign relaxed priority scheduler, the MultiQueue~\cite{MQ}, can cause incremental algorithms to incur $\Omega( \log n) $ wasted work.

More precisely, let $inv_{i,i+1}$ be event that the relaxed scheduler returns the task with label $i+1$ before task with label $i$. First, we will prove the following claim for $MultiQueue$ being used as a relaxed scheduler:
\begin{claim}
\label{ConsecutiveLabelInversion}
For every $i>1$, $Pr[inv_{i,i+1}] \ge 1/8$.
\end{claim}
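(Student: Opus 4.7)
The plan is to lower-bound $\Pr[inv_{i,i+1}]$ by exploiting the two sources of randomness in the MultiQueue with $q \geq 2$ queues: each inserted task is placed uniformly at random into one of the queues, and each \texttt{ApproxGetMin} samples two queue indices $X, Y$ independently and uniformly at random, returning the higher-priority task at the tops of $Q_X$ and $Q_Y$.

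First, I would condition on tasks $i$ and $i+1$ being placed in distinct queues $Q_a$ and $Q_b$ respectively. By independence of placements this event has probability $(q-1)/q \geq 1/2$ for $q \geq 2$, so it suffices to lower-bound the conditional inversion probability by $1/4$.

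Second, I would analyze the first step $T$ at which either $i$ or $i+1$ is returned. Partition the conditional sample space into three disjoint events: $Y$ = by time $T$ all labels less than $i$ have been extracted (so $i$ sits at the top of $Q_a$, $i+1$ sits at the top of $Q_b$, and every other queue's top exceeds $i+1$); $X_\beta$ = $i$ is returned at $T$ (no inversion) while some label less than $i$ still lingers in some queue; $X_\gamma$ = $i+1$ is returned at $T$ (inversion) while some label less than $i$ still lingers. Then $\Pr[inv_{i,i+1} \mid a \neq b] = \Pr[X_\gamma] + \Pr[Y] \cdot \rho$, where, in the clean $Y$-state, a pick returns $i$ iff $a \in \{X,Y\}$ and returns $i+1$ iff $a \notin \{X,Y\}$ and $b \in \{X,Y\}$; conditioning on returning one of $\{i,i+1\}$ yields $\rho = (2q-3)/(4q-4) \geq 1/4$, minimized at $q = 2$ with $\rho = 1/4$.

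The main obstacle is to establish $\Pr[X_\beta] = \Pr[X_\gamma]$. I would obtain this from the $Q_a \leftrightarrow Q_b$ symmetry of the MultiQueue: the map $\phi$ that swaps the queue assignments of all labels other than $i, i+1$ between $Q_a$ and $Q_b$ and simultaneously swaps $a \leftrightarrow b$ throughout the pick sequence is a measure-preserving bijection on the conditional sample space (because every $\sigma(j)$ with $j \neq i, i+1$ is i.i.d.\ uniform and every pick coordinate is i.i.d.\ uniform). Direct inspection of queue contents at the critical step shows that $\phi$ sends each realization of $X_\beta$ onto a realization of $X_\gamma$: the lingering label ``$< i$ in the opposite queue of $i$'' in $\omega$ becomes a lingering label ``$< i$ in the opposite queue of $i+1$'' in $\phi(\omega)$, and the pick extracting $i$ through $Q_a$ in $\omega$ becomes a pick extracting $i+1$ through $Q_b$ in $\phi(\omega)$. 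Given this, $\Pr[Y] = 1 - 2\Pr[X_\gamma]$ and the conditional inversion probability simplifies to $\rho + \Pr[X_\gamma](1 - 2\rho)$; since $\rho \leq 1/2$ for all finite $q \geq 2$, this is at least $\rho \geq 1/4$. Combining with the placement factor $1/2$ yields $\Pr[inv_{i,i+1}] \geq (1/2)(1/4) = 1/8$, as claimed.
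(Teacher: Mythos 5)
Your overall architecture (condition on the two tasks landing in distinct queues, pay a factor $\geq 1/2$, then prove a conditional inversion probability $\geq 1/4$) matches the paper's, and your analysis of the ``clean'' event $Y$ is sound. The gap is the central claim $\Pr[X_\beta]=\Pr[X_\gamma]$. The swap map $\phi$ does \emph{not} send every realization of $X_\beta$ to one of $X_\gamma$: consider a realization in which, at the critical step $T$, the scheduler's pick is exactly the pair $\{a,b\}$, both $i$ and $i+1$ are at the tops of $Q_a$ and $Q_b$, and a label smaller than $i$ still lingers in some \emph{third} queue (this requires $q\ge 3$, but your argument is stated for all $q\ge 2$). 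Then $\omega\in X_\beta$, but $\phi$ fixes that pick as an unordered pair, the third-queue label is untouched by the swap, and the direct comparison of the two tops still returns $i$ in $\phi(\omega)$; hence $\phi(\omega)\in X_\beta$ as well. Your verbal justification implicitly assumes the lingering small label sits in the queue opposite to $i$ and that $i$ is extracted through a pick $\{a,a\}$ or $\{a,c\}$ with a large top in $Q_c$ --- in those cases the swap does work --- but the direct-pair case is exactly where the MultiQueue is \emph{not} symmetric in $i$ versus $i+1$: whenever the two queues are compared head-to-head, $i$ always wins. Consequently $\Pr[X_\beta]=\Pr[X_\gamma]$ is unjustified (and in general false, with $\Pr[X_\beta]>\Pr[X_\gamma]$), so the step $\Pr[Y]=1-2\Pr[X_\gamma]$ and the final bound $\Pr[inv_{i,i+1}\mid a\neq b]\ge\rho$ do not follow; if $\Pr[X_\beta]$ is large while $\Pr[X_\gamma]$ and $\Pr[Y]$ are small, your expression $\Pr[X_\gamma]+\Pr[Y]\rho$ gives nothing.

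The paper's proof is organized precisely to quarantine this asymmetry: it splits on the event $T_{u,v}$ that $i$ and $i+1$ are ever \emph{simultaneously} at the tops of their queues. On $\neg T_{u,v}$ the two tasks are never compared directly, so the swap/symmetry argument is valid and gives $1/2$; on $T_{u,v}$ the symmetry is abandoned and replaced by the counting argument $\Pr[X_u]\le 3\Pr[X_v]$ (the ordered picks $(a,a),(a,b),(b,a)$ favor $i$, only $(b,b)$ favors $i+1$, and the $2r$ picks against larger tops are shared), which yields the $1/4$ bound. To repair your proof you would need to split off, inside $X_\beta\cup X_\gamma$, the sub-event where both tasks are at their tops at step $T$ and handle it by such a ratio argument --- at which point you have essentially reconstructed the paper's case analysis rather than found an alternative route.
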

\begin{proof}
First we describe how incremental algorithms work using $MultiQueues$.
The $MultiQueue$ maintains $k$ sequential priority queues, where $k$ can be assumed to be a fixed parameter.
As before, each task is assigned a label according to the random permutation of input tasks (lower label means higher priority).
Initially, all tasks are inserted in $MultiQueue$ as follows:
for each task, we select one priority queue uniformly at random out, and insert the task into it.
To retrieve a task, the processor selects two priority queues uniformly at random and returns the task with highest priority (lowest label), among the tasks on the top of selected priority queues.

Let $\ell(u)=i$ and $\ell(v)=i+1$. Additionally, let  $q_u$ and $q_v$ be the queues where $u$ and $v$ are inserted in initially.
Also, let $T_{u,v}$ be event that $u$ and $v$ are the top tasks of queues at some point during the run of our algorithm.
We have that:
\begin{align}
Pr[inv_{i,i+1}] &=Pr[q_u \neq q_v]\Bigg(Pr[T_{u,v}]Pr[inv_{i,i+1}|T_{u,v},q_u \neq q_v] \\&+
Pr[\neg T_{u,v}]Pr[inv_{i,i+1}|\neg T_{u,v},q_u \neq q_v] \Bigg) \\
&\ge (1-\frac{1}{k})Min\Bigg(Pr[inv_{i,i+1}|T_{u,v},q_u \neq q_v],  \\ & \qquad \qquad \qquad Pr[inv_{i,i+1}|\neg T_{u,v},q_u \neq q_v]\Bigg).
\end{align}
Observe that if $\neg T_{u,v}$ and $q_u \neq q_v$, this means that tasks $u$ and $v$ are never compared against each other. Consider two runs of our algorithm until it returns either $u$ or $v$, first with initially chosen $q_u$ and $q_v$ and second with $q_u$ and $q_v$ swapped (these cases have equal probability of occurring).
Since vertices $u$ and $v$ have consecutive labels and are never compared by the MultiQueue,
this means that all the comparison results are the same in both cases, hence the scheduler 
has equal probability of returning $u$ or $v$. (It is worth mentioning here is that $T_{u,v}$ only depends on values $q_u$ and $q_v$ and does not depend on their ordering.)

This means that :
\begin{equation}
Pr[inv_{i,i+1}|\neg T_{u,v},q_u \neq q_v] \ge 1/2.    
\end{equation}
Now we look at the case where $u$ and $v$ are top tasks of queues at some step $t$.
Let $X_u$ be event that $u$ is returned by MultiQueue and similarly, let $X_v$ be event that $v$ is returned. We need to lower bound the probability that $X_v$ happens before $X_u$ .
We can safely ignore all the other tasks returned by scheduler and processed by algorithm since it is independent of whether $u$ or $v$ is returned first.
Let $r$ be the number of top tasks in queues which have labels larger than $i+1$.
At step $t$, $Pr[X_u]=3/k^2+2r/k^2$ and $Pr[X_v]=1/k^2+2r/k^2$, So we have that 
\begin{equation} 
Pr[X_u] \le 3 Pr[X_v].
\end{equation}
Observe that during the run of algorithm  $r$ will start to increase but we will always have invariant that $Pr[X_u] \le 3 Pr[X_v]$. This means that probability that $X_v$ happens before $X_u$ is at least:
\begin{equation}
    \frac{Pr[X_v]}{Pr[X_u]+Pr[X_v]} \ge 1/4.
\end{equation}
This gives us that:
\begin{equation}
Pr[inv_{i,i+1}| T_{u,v},q_u \neq q_v] \ge 1/4.    
\end{equation}
and consequently, since $1-1/k \ge 1/2$ we get that:
\begin{equation}
    Pr[inv_{i,i+1}] \ge (1-1/k)/4 \ge 1/8.
\end{equation}
\end{proof}
\begin{theorem}
For Delaunay triangulation and comparison sorting, the expected number of extra steps is lower bounded by $\Omega(\log{n})$.
\end{theorem}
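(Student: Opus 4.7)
The plan is to couple Claim~\ref{ConsecutiveLabelInversion}, which lower-bounds the scheduler's inversion probability for consecutive labels by $1/8$, with a matching lower bound $\Pr[D_{i, i+1}] = \Omega(1/i)$ on the probability that consecutive-priority tasks lie in a dependency relation, and then to sum via linearity. Concretely, for each $i$ I would define the indicator $X_i = \mathbf{1}[inv_{i, i+1} \cap D_{i, i+1}]$; if $X_i = 1$ then at the first step at which task $i+1$ is returned by the scheduler, task $i$ has not yet been processed but is an ancestor of $i+1$, so $i+1$ must fail its dependency check and contribute at least one extra step. Distinct $i$ correspond to distinct labels $i+1$ of the rejected task, so the extra steps charged to different $(i, i+1)$ pairs are distinct.

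Next I would verify independence: $D_{i, i+1}$ is determined only by the random permutation assigning priorities to tasks together with the algorithmic dependency rule, whereas $inv_{i, i+1}$ depends only on the MultiQueue's internal randomness (queue assignments and random queue-sampling). Hence $\Pr[X_i = 1] \ge \Pr[inv_{i, i+1}] \cdot \Pr[D_{i, i+1}] \ge (1/8)\,\Pr[D_{i, i+1}]$ by Claim~\ref{ConsecutiveLabelInversion}. By linearity of expectation,
\[
E[\text{number of extra steps}] \;\ge\; \sum_{i=2}^{n-1} \tfrac{1}{8}\,\Pr[D_{i, i+1}],
\]
so the theorem reduces to showing $\sum_{i} \Pr[D_{i, i+1}] = \Omega(\log n)$.

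The main technical obstacle is establishing $\Pr[D_{i, i+1}] = \Omega(1/i)$, which is a matching companion to the upper bound in property (1). For comparison sorting I would use the following symmetry argument: conditioning on the set of first $i+1$ inserted keys, the in-order predecessor and successor of key $i+1$ among those keys are necessarily ancestors of $i+1$ in the BST, and by symmetry of the random insertion order each such neighbour is equally likely to occupy any of the positions $1, \dots, i$, giving $\Pr[D_{i, i+1}] \ge 1/i$. For Delaunay triangulation, the analogous bound follows by choosing a natural input distribution (e.g.\ points in general position drawn uniformly from the unit square) and arguing that a nearest triangulation neighbour of point $i+1$ among the first $i+1$ points, which must be a Delaunay ancestor by the encroachment definition, is uniformly likely to have been inserted at any earlier position, again yielding $\Pr[D_{i, i+1}] = \Omega(1/i)$. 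Summing $1/i$ then gives $\Omega(\log n)$ and completes the proof; the delicate step will be pinning down a point distribution for Delaunay triangulation under which this symmetry actually goes through, since property (1) was only stated as an upper bound.
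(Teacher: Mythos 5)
Your overall route is the same as the paper's: multiply the $1/8$ inversion bound of Claim~\ref{ConsecutiveLabelInversion} by a lower bound $p_{i,i+1}\ge 1/i$ on the dependency probability for consecutive labels, argue the two events are uncorrelated, and sum over $i$ to get $\Omega(\log n)$; your BST argument (the in-order predecessor/successor of the key labeled $i+1$ among the first $i+1$ keys is an ancestor, and the label-$i$ key is uniform among the other $i$ keys) is exactly the paper's. Two remarks. First, the step you flag as delicate for Delaunay triangulation is not actually an obstacle, and no assumption on the input point distribution is needed: the only randomness required is the random permutation assigning labels, which the model of randomized incremental algorithms already provides. Condition on the \emph{set} of points receiving labels $1,\dots,i+1$ and on which of them is labeled $i+1$; by symmetry the point labeled $i$ is uniform over the remaining $i$ points, and since the point labeled $i+1$ has at least one Delaunay neighbor (e.g.\ its nearest neighbor) in the triangulation of these $i+1$ points, the probability that the label-$i$ point is such a neighbor is at least $1/i$, which yields the dependency. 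This is precisely how the paper argues, so your proposal is completed by dropping the distributional assumption rather than by finding one under which a geometric symmetry holds.

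Second, your justification of non-correlation is slightly loose: it is not literally true that $inv_{i,i+1}$ ``depends only on the MultiQueue's internal randomness,'' since the trajectory of the algorithm (which tasks are processed when, hence the queue contents) depends on the dependency structure and thus on the permutation. The paper makes this precise through its charging scheme: when a returned task cannot be processed, one only checks whether it depends on the task with the immediately preceding label, so the dependency between labels $i$ and $i+1$ is never examined before one of the two tasks is returned, and hence the run up to that point cannot correlate $p_{i,i+1}$ with $Pr[inv_{i,i+1}]$. With that fix (and the charging of distinct extra steps to distinct pairs, which you already have), your argument matches the paper's proof.
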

\begin{proof}
To establish the lower bound, we can assume that if the scheduler returns vertex $v$,
which depends on some other unprocessed vertex, we check if vertex $u$ with label $\ell(v)-1$ is not processed and we charge edge $e=(u,v)$, if $v$ depends on $u$.
This way, we get that $p_{i,i+1}$ and $Pr[inv_{i,i+1}]$ are not correlated, since if we run algorithm to the point where vertex with label $i$ or $i+1$ is returned,
it will never check the dependency between them.

We will employ the following property of Delaunay triangulation and $BST$-based comparison sorting:
for any $i>0$, $p_{i,i+1} \ge 1/i$. This property is easy to  verify:  in Delaunay triangulation there is at least $1/i$ probability that vertices with labels $i$ and $i+1$ are neighbours in the Delaunay triangulation of vertices with labels $1,2,...,i,i+1$, in $BST$ based comparison sorting there is at least $1/i$ probability that
tasks with labels $i$ and $i+1$ have consecutive keys among keys of tasks with labels $1,2,...,i,i+1$
and in both cases the task with label $i+1$ will depend on the task with label $i$(see \cite{blelloch2016parallelism}).

This, in combination with Claim \ref{ConsecutiveLabelInversion}
will give us the lower bound on the number of extra steps, since if task with label $i+1$ depends on the task with label $i$ and it is returned first by scheduler, this will trigger at least one extra step, caused by not being able to process task:
\begin{equation}
    E[\# extra steps] \ge \sum_{i=1}^{n-1} p_{i,i+1}Pr[inv_{i,i+1}] \ge 1/8 \log{n}.
\end{equation}

\end{proof}

\section{Analyzing SSSP under Relaxed Scheduling}\label{section:sssp}

\paragraph{Preliminaries.}
Since the algorithm is different from the ones we considered thus far, we re-introduce some notation. 
We assume we are given a directed graph $G=(V,E)$ with positive edge weights $w(e)$ for each edge $e \in E$, and a source vertex $s$.
For each vertex $v \in V$, let $d(v)$ be the weight of a shortest path from $s$ to $v$.
Additionally, let $d_{max}=\max \{d(v) : v \in V\}$ and $w_{min}=\min \{ w(e) : e \in E \}$.

We consider the sequential pseudocode from Algorithm\ref{alg:1}, which uses a relaxed priority queue $Q_k$ to find shortest paths
from $s$ via a procedure similar to the $\Delta$-stepping algorithm~\cite{meyer2003delta}. 

In this algorithm $Q_k.push(v, dist)$ inserts a vertex $v$ with distance $dist$ in $Q_k$, $Q_k.pop()$ removes and returns a vertex, distance pair $(v,dist)$, such that $v$ is among the $k$ smallest distance vertices in $Q_k$. 
We also assume that $Q_k$ supports a $Q_k.DecreaseKey(v, dist)$ operation, which atomically decreases the distance of vertex $v$ in $Q_k$ to $dist$.

\begin{algorithm} 
\caption{SSSP algorithm based on a relaxed priority queue. }
\label{alg:1}

    \KwData{Graph $G=(V,E)$, source vertex $s$. \\ Initially empty relaxed priority queue $Q_k$. \\ Array $dist[n]$ for tentative distances. }
    \For {each vertex $v \in V$} { $dist[v] \gets+\infty$ }
    $dist[s] \gets 0$ \\ 
    $Q_k.push(s, 0)$ \\

    \While{$!Q_k.empty()$}
    {
        $(v,curDist)\gets Q_k.pop()$ \\ 
       
        \If {$curDist > dist[v]$} { 
            \Continue\ // $curDist$ is outdated 
        }
		\For {$u: (v,u) \in E$}
        {
            $e \gets (v, u)$\\
        	\If{$dist[u] > curDist + w(e)$}
        	{ 
            	$dist[u] \gets curDist + w(e)$ \\
            	{// We assume that we can check whether $v$ is in \\ //
            	$Q_k$, this can be implemented via maintaining \\ //
            	the corresponding flag for each vertex.} \\ 
            	\If {$u \in Q_k$}
            	{
            	    $Q_k.DecreaseKey(u, dist[u])$
            	}
            	\Else
            	{
            	    $Q_k.Add(u, dist[u])$
            	}
            }
        }
    }
\end{algorithm}
\todo{"dist" instead of "tent"}
\todo{... with an additional marker field in vertices}

\paragraph{Analysis.} 
We will prove the following statement, which upper bounds the extra work incurred by the relaxed scheduler:

\begin{theorem}
The number of $Q_k.pop()$ operations performed by Algorithm \ref{alg:1}
is $O(k^2d_{\max}/w_{\min})+n$.
\end{theorem}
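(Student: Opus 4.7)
My plan is to adapt the $\Delta$-stepping-style level/phase analysis to the relaxed-scheduler setting, leveraging both the RankBound and Fairness properties. Let $m_t$ denote the minimum priority in $Q_k$ at step $t$, and partition distances into \emph{levels} of width $w_{\min}$, where level $\ell$ is $[\ell w_{\min}, (\ell+1) w_{\min})$. First I establish two invariants: (i) $m_t$ is non-decreasing in $t$, because any push or DecreaseKey we perform sets $dist[v] = dist[u] + w(u,v) \ge m_t + w_{\min}$ for the currently popped $u$, so no operation can lower the current minimum; and (ii) whenever $Q_k$ is nonempty, $m_t \le d_{\max}$, since among vertices not yet popped at their true distance $d(\cdot)$ the one with smallest $d$-value has all its shortest-path predecessors already finalized, so it sits in $Q_k$ at distance $d(\cdot) \le d_{\max}$.

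Next I partition the execution into phases indexed by $\ell \in \{0,1,\ldots,\lceil d_{\max}/w_{\min}\rceil\}$, where phase $\ell$ consists of the contiguous block of steps with $m_t$ in level $\ell$; by invariants (i)--(ii), there are at most $O(d_{\max}/w_{\min})$ nonempty phases. I then show that every \emph{min-pop}---a pop of a vertex $v$ whose tentative distance equals $m_t$---satisfies $dist[v] = d(v)$: otherwise a shortest-path predecessor $u$ of $v$ would still reside in $Q_k$ at distance $d(u) < dist[v]$ (by the same reasoning as invariant (ii) applied to the subproblem of finalizing $v$), contradicting that $v$ is the current minimum. Since $dist[v]$ cannot decrease below $d(v)$, each vertex is popped as the minimum at most once, so the total number of min-pops contributes at most $n$ to the pop count.

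The heart of the argument is to bound the number of non-min pops during each phase by $O(k^2)$, directly adapting the pattern of Lemmas~\ref{lem:Aij} and \ref{lem:Rij}. Invariant (i) guarantees that during phase $\ell$ no new element enters $Q_k$ at a distance below $(\ell+1) w_{\min}$, so the set of elements that can ``cut in front of'' the current level-$\ell$ minimum is frozen from above. Applying the rank/fairness pair exactly as in Lemma~\ref{lem:Rij}: once the level-$\ell$ minimum vertex has rank $\le k$, at most $k-1$ elements must be popped ahead of it, each costing at most $k$ steps by Fairness, for a total of $O(k^2)$ steps before the phase terminates. Summing over the $O(d_{\max}/w_{\min})$ nonempty phases then yields $O(k^2 d_{\max}/w_{\min})$ non-min pops, and adding the $\le n$ min-pops gives the claimed bound. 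The main obstacle is this per-phase $O(k^2)$ argument: the discrete-label reasoning of Lemmas~\ref{lem:Aij} and \ref{lem:Rij} must be carefully ported to the setting where tentative distances change continuously via DecreaseKey, and one must verify that invariant (i) indeed insulates each phase from the disruption that such decreases could otherwise cause.
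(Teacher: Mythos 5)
Your overall plan (levels of width $w_{\min}$ on tentative keys, a monotone minimum, phases, rank/fairness giving a $\poly(k)$ budget per phase, plus $n$ useful pops) has the right shape and is close in spirit to the paper's proof, which buckets vertices by \emph{true} distance and charges each bucket $|B_i|+k^2$ pops. But your accounting step has a genuine gap: you charge every \emph{non-min} pop to the $O(k^2)$-per-phase budget, and that budget is false. During phase $\ell$ the scheduler must eventually pop \emph{all} elements whose keys lie in level $\ell$, and every pop of a level-$\ell$ element whose key is strictly larger than the current minimum $m_t$ is a non-min pop under your definition. If one level contains $B\gg k$ elements (e.g.\ a star rooted at $s$ with nearly equal edge weights puts $\Theta(n)$ vertices into a single level), fairness only forces roughly one min-pop out of every $k$ pops, so a single phase can contain $\Theta(B)$ non-min pops, far exceeding $O(k^2)$. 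Relatedly, the sentence ``for a total of $O(k^2)$ steps before the phase terminates'' conflates popping the current minimum with finishing the phase: after the minimum is popped, the new minimum is typically still in level $\ell$, and the phase ends only once all $B$ level-$\ell$ elements have been drained.

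The repair is to change the dichotomy: classify each pop by whether the popped key lies in the current level or in a strictly higher level. For higher-level pops, the rank/fairness argument you sketch (as in Lemma~\ref{lem:Rij}) does give at most $k^2$ per phase: while at least $k$ level-$\ell$ elements remain, $RankBound$ forces every pop to return a level-$\ell$ element, and once fewer than $k$ remain, $Fairness$ drains them within $k^2$ pops. For the same-level pops you must show that they are \emph{all} settles, not only the exact-minimum ones, so that they can be charged to the $n$ term: if $v$ has key in level $\ell$ during phase $\ell$ but $dist[v]>d(v)$, then the first unsettled vertex $u$ on a shortest path to $v$ sits in $Q_k$ with key $d(u)\ge m_t\ge \ell w_{\min}$, whence $d(v)\ge d(u)+w_{\min}\ge(\ell+1)w_{\min}>dist[v]\ge d(v)$, a contradiction. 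This lemma is exactly the role played by the paper's ``bucket property,'' and your proposal proves it only for the element achieving $m_t$. With it, same-level pops settle distinct vertices (at most $n$ in total), higher-level pops cost at most $k^2$ per phase, and your invariants (i)--(ii) bound the number of phases by $O(d_{\max}/w_{\min})$, recovering the theorem along essentially the paper's lines.
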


\begin{proof}
Our analysis will follow the general pattern of $\Delta$-stepping analysis. 
We will partition the vertex set $V$ into buckets, based on distance: 
vertex $v$ belongs to bucket $B_i$ iff $d(v) \in [iw_{min},(i+1)w_{min})$. Let $t=d_{\max}/w_{\min}$ be the total number of buckets we need (for simplicity we assume that $d_{\max}/w_{\min}$ is an integer). 

Observe that because of the way we defined buckets, we have the following property, which we will call \emph{the bucket property} : for any vertex $v \in V$, no shortest path from $s$ to $v$ contains vertices which belong to the same bucket. 

We say that Algorithm \ref{alg:1} processes vertex $v$ \emph{at the correct distance} if $Q_k.pop()$ returns $(v, d(v))$, this means that $dist[v]=d(v)$ at this point and we relax outgoing edges of $v$. (See Algorithm~\ref{alg:1} for clarification.)

We fix $i<t$ and look at what happens 
when Algorithm \ref{alg:1} processes all vertices in the buckets 
$B_0, B_1, ..., B_i$ at the correct distance. 
Because of the bucket property, we get that $d(u)=dist[u]$ for every $u\in B_{i+1}$, and the vertices from bucket $B_{i+1}$ are either ready to be processed at the correct distance, or are already processed at the correct distance. To avoid the second case, we also assume that 
if $Q_k.pop()$ returns $(u, d(u))$, where $u \in B_{i+1}$ and not all vertices in the buckets $B_0, B_1, ..., B_i$ are processed at the correct distance, then this $Q_k.pop()$ operation still counts towards the total number of $Q_k.pop()$ operations,
but it does not actually remove the task and does not perform edge relaxations, even though $u$ is ready to be processed at the correct distance. This assumption only increases the total number of $Q_k.pop()$ operations, so to prove the claim it suffices to derive an upper bound for this pessimistic case.

Once the algorithm processes the vertices in buckets $B_0, B_1, ..., B_i$
at the correct distances, we know that the only vertices with tentative distance less than $(i+2)w_{min}$ 
are the vertices in the bucket $B_{i+1}$. 
(Note that this statement would not hold if we didn't have the $DecreaseKey$ operation: if we insert multiple copies of vertices in $Q_k$ with different distances, as in some versions of Dijkstra, there might exist outdated copies of vertex $u \in B_j, j<i$, even though $u$ was already processed at the correct distance.)
This means that, at this point, the top $|B_{i+1}|$ vertices (vertices with the smallest distance estimates) belong to $B_{i+1}$.

Next, we bound how many $Q_k.pop()$ operations are needed to process the vertices in $B_{i+1}$, after all vertices in the buckets $B_0, B_1, ..., B_i$
are processed.
If $|B_{i+1}|>k$, using the rank property, we have that the first ($|B_{i+1}|-k$) $Q.pop()$ operations process vertices in $B_{i+1}$. 
If $|B_{i+1}| \le k$, we know that it will take at most $k^2$ $Q_k.pop()$ operations to process all vertices in $B_{i+1}$, since, because by the fairness bound, the number of $Q_k.pop()$ operations to return the top vertex (the one with the smallest tentative distance) is at most $k$, and we showed that the top vertex belongs to $B_{i+1}$ until all vertices in $B_{i+1}$ are processed. By combining these two cases, we get that the  number of $Q_k.pop()$ operations to process vertices in $B_{i+1}$ at the correct distance is at most $|B_{i+1}|+k^2$. 

Thus the  number of $Q_k.pop()$ operations performed by Algorithm \ref{alg:1} in total is at most:
\begin{equation}
\sum_{i=0}^{t} (k^2+|B_i|)=n+O(k^2d_{\max}/w_{\min}),
\end{equation}

\noindent as claimed. 

\end{proof}

\paragraph{Discussion.} A clear limitation is that the bound depends on the maximum distance $d_{\max}$, and on the minimum weight $w_{\min}$. Hence, this bound would be relevant only for low-diameter graphs with bounded edge weights. We note however this case is relatively common: for instance,~\cite{dhulipala17julienne} considers weighted graph models of low diameter, where weights are chosen in the interval $[1, \log n)$. These assumptions appear to hold in for many publicly available weighted graphs~\cite{snapnets}. 
Further, our argument assumes a relaxed scheduler supporting $DecreaseKey$ operations. This operation can be supported by schedulers such as the SprayList~\cite{SprayList} or MultiQueues~\cite{MQ, AKLN17} where elements are hashed consistently into the priority queues.

\section{Experiments}

\begin{figure*}
    \includegraphics[width=0.9\textwidth]{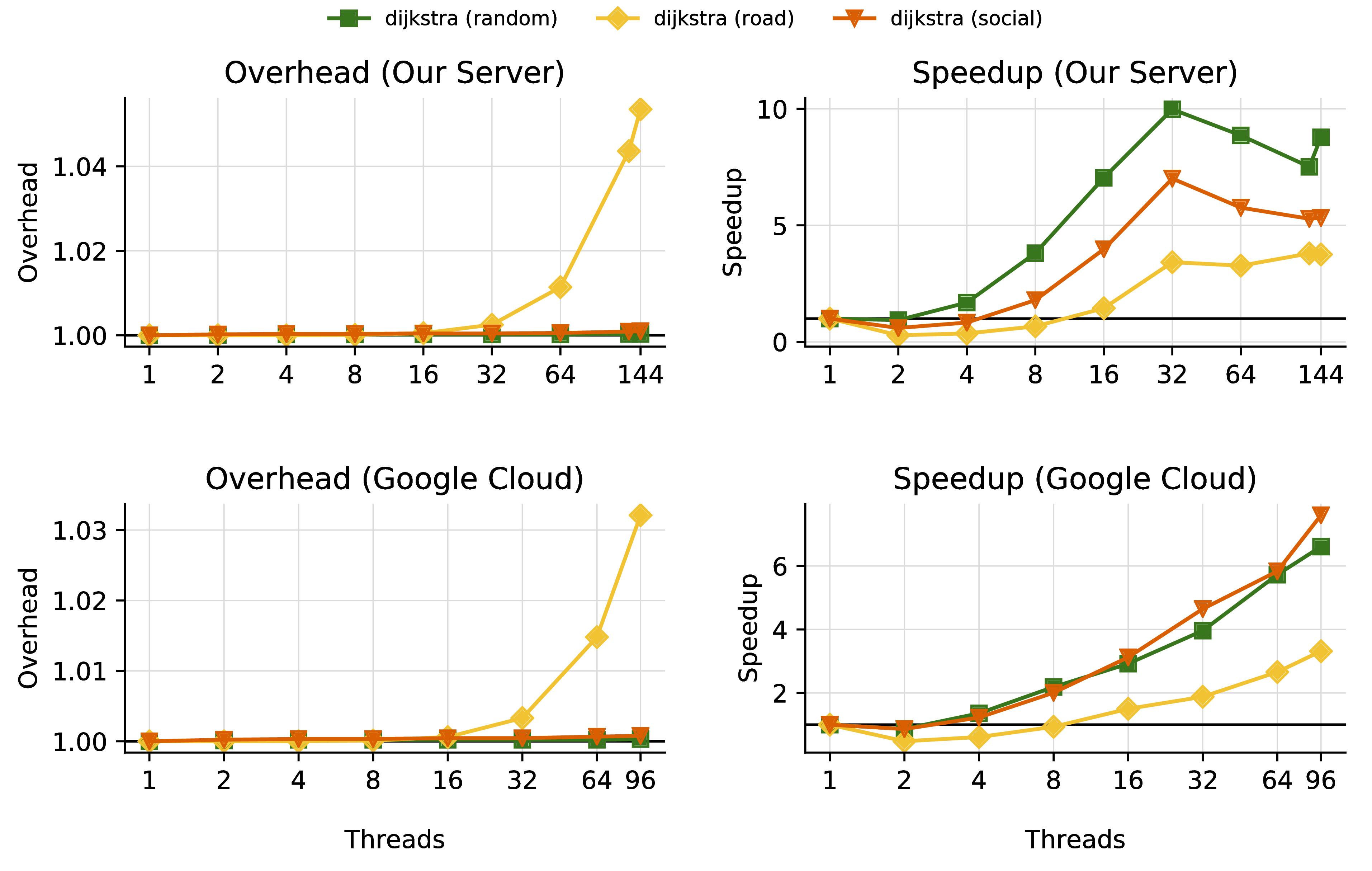}
    \caption{Overheads (left) and speedups (right) for parallel SSSP Dijkstra's algorithm executed via a MultiQueue relaxed scheduler on random, road network, and social network graphs. The overhead is measured as the ratio between the number of tasks executed via a relaxed scheduler versus an exact one.}
    \label{figure:experiment_dijkstra}
\end{figure*}

\begin{figure*}
    \includegraphics[width=0.8\textwidth]{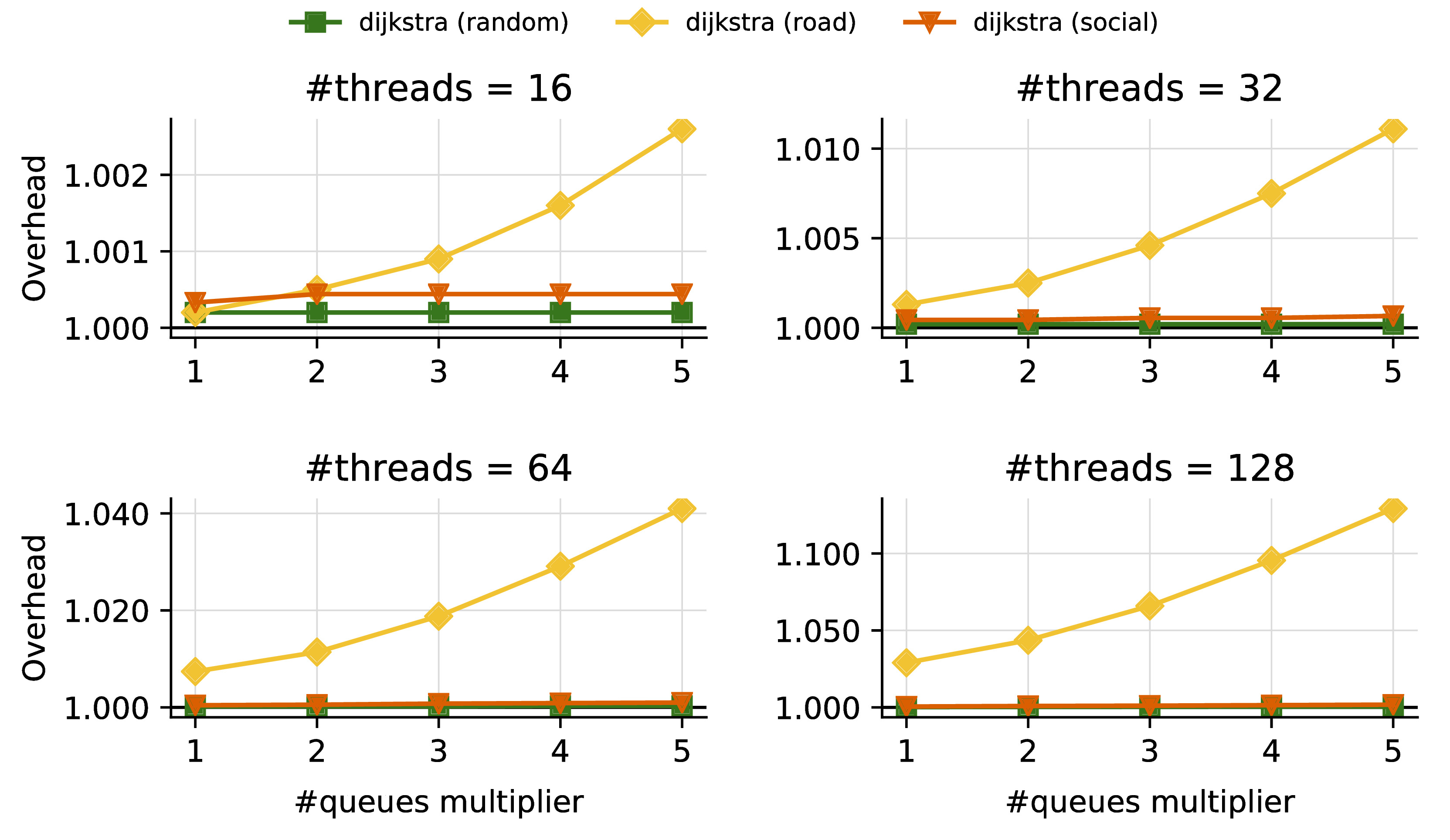}
    \caption{Relaxation overheads versus relaxation factor/queue multiplier for parallel SSSP Dijkstra's algorithm. The number of queues is the multiplier ($x$ axis) times the number of threads, and is proportional to the average relaxation  factor of the queue~\cite{AKLN17}. }
    \label{figure:dijkstra_queue_numbers}
\end{figure*}

We implemented the parallel SSSP Dijkstra's algorithm described in Section~\ref{section:sssp} using an instance of the MultiQueue relaxed priority scheduler~\cite{MQ, AKLN17}. In the classic sequential algorithm nodes are processed sequentially, while in this parallel version a node can be processed several times due to out-of-order execution. 
In our experiments, we are interested in the total number of tasks processed by the concurrent variant, in order to examine the overhead of relaxation in concurrent executions. In addition, we also measure execution times for increasing number of threads. \emph{Overhead} is measured as the average number of tasks executed in a concurrent execution divided by the number of tasks executed in a sequential execution using an exact scheduler. 

\paragraph{Sample graphs.}
We use the following list of graphs in our experiments:
\begin{itemize}
    \item Random undirected graph with $1$ million nodes and $10$ million edges, with uniform random weights between $0$ and $100$ (\textbf{random});
    \item USA road network graph with physical distances as edge lengths; $\sim24$ million nodes and $\sim58$ million edges (\textbf{road}) \cite{demetrescu2009shortest}; 
    \item LiveJournal social network friendship graph; $\sim5$ million nodes and $\sim69$ million edges, with uniform random weights between $0$ and $100$ (\textbf{social}) \cite{snapnets}.
\end{itemize}

\paragraph{Platforms.}
We evaluated the experiment on a server with 4 Intel Xeon Gold 6150 (Skylake) sockets. Each socket has 18 2.70 GHz cores, each of which multiplexes 2 hardware threads, for a total of 144 hardware threads. 
In addition, we ran the experiment on a Google Cloud Platform VM supporting to 96 hardware threads.

\paragraph{Experimental results.} The experimental results are summarized in Figure~\ref{figure:experiment_dijkstra}. On the left column, notice that, on both machines, the overheads of relaxation are almost negligible: for the random graph and the social network, the overheads are almost $1\%$ at all thread counts, what practically means the absence of extra work. (Recall that the number of queues is always $2\times$ the number of threads, so the relaxation factor increases with the thread count.) 

The road network incurs higher overheads ($5\%$ at 144 threads / 288 queues). This can be explained by the higher diameter of the graph ($6261$, versus $16$ for the LiveJournal and $6$ for the random graphs), and by the higher variance in edge costs for the road network. 
In terms of speedup (right), our implementation scales well for 1-2 sockets on our local server, after which NUMA effects become prevalent. 
NUMA effects are less prevalent on the Google Cloud machine, but the maximum speedup is also  more limited ($<7\times$ instead of $~10\times$). 
In Figure~\ref{figure:dijkstra_queue_numbers}, we examine the relaxation overhead (in terms of the amount of extra tasks executed) versus the relaxation factor. While we cannot control the relaxation factor exactly, we know that the average value of this factor is proportional to the number of queues allocated, which is the number of threads (fixed for each sub-plot) times the multiplier for the number of queues (the $x$ axis)~\cite{AKLN17}. We notice that these overheads are only non-negligible for the road network graph. On the one hand, this suggests that our worst-case analysis is not tight, but can also be interpreted as showing that the overheads of relaxation do become apparent on dense, high-diameter graphs such as road networks.

\section{Conclusion}

We have provided the first efficiency bounds for parallel implementations of SSSP and Delaunay mesh triangulation under relaxed schedulers. 
In a nutshell, our results show that, for some inputs and under analytic assumptions, the overheads of parallelizing these algorithms via relaxed schedulers can be negligible. Our findings complement empirical results showing similar trends in the context of high-performance relaxed schedulers~\cite{Nguyen13, lenharth2015priority}. 
While our analysis was specialized to these algorithms, we believe that our techniques can be generalized to other iterative algorithms, which we leave as future work.

\section*{Acknowledgments}

We would like to thank Ekaterina Goltsova, Charles E. Leiserson, Tao B. Schardl, and Matthew Kilgore for useful discussions in the incipent stages of this work, and Justin Kopinsky for careful proofreading and insightful suggestions on an earlier draft. 

Giorgi Nadiradze was supported by the Swiss National Fund Ambizione Project PZ00P2 161375. Dan Alistarh was supported by European Research Council funding award PR1042ERC01.

\bibliographystyle{plain}

\begin{thebibliography}{10}

\bibitem{Alistarh14}
Dan Alistarh, James Aspnes, Keren Censor-Hillel, Seth Gilbert, and Rachid
  Guerraoui.
\newblock Tight bounds for asynchronous renaming.
\newblock {\em J. ACM}, 61(3):18:1--18:51, June 2014.

\bibitem{alistarhbkln18}
Dan Alistarh, Trevor Brown, Justin Kopinsky, Jerry~Z. Li, and Giorgi Nadiradze.
\newblock Distributionally linearizable data structures.
\newblock In {\em Proceedings of the 30th on Symposium on Parallelism in
  Algorithms and Architectures}, SPAA '18, pages 133--142, New York, NY, USA,
  2018. ACM.

\bibitem{alistarhBKN18}
Dan Alistarh, Trevor Brown, Justin Kopinsky, and Giorgi Nadiradze.
\newblock Relaxed schedulers can efficiently parallelize iterative algorithms.
\newblock In {\em Proceedings of the 2018 ACM Symposium on Principles of
  Distributed Computing}, PODC '18, pages 377--386, New York, NY, USA, 2018.
  ACM.

\bibitem{AKLN17}
Dan Alistarh, Justin Kopinsky, Jerry Li, and Giorgi Nadiradze.
\newblock The power of choice in priority scheduling.
\newblock In Elad~Michael Schiller and Alexander~A. Schwarzmann, editors, {\em
  Proceedings of the {ACM} Symposium on Principles of Distributed Computing,
  {PODC} 2017, Washington, DC, USA, July 25-27, 2017}, pages 283--292. {ACM},
  2017.

\bibitem{SprayList}
Dan Alistarh, Justin Kopinsky, Jerry Li, and Nir Shavit.
\newblock The spraylist: A scalable relaxed priority queue.
\newblock In {\em 20th ACM SIGPLAN Symposium on Principles and Practice of
  Parallel Programming}, PPoPP 2015, San Francisco, CA, USA, 2015. ACM.

\bibitem{Basin11}
Dmitry Basin, Rui Fan, Idit Keidar, Ofer Kiselov, and Dmitri Perelman.
\newblock Caf\'{E}: Scalable task pools with adjustable fairness and
  contention.
\newblock In {\em Proceedings of the 25th International Conference on
  Distributed Computing}, DISC'11, pages 475--488, Berlin, Heidelberg, 2011.
  Springer-Verlag.

\bibitem{Blelloch}
Guy~E Blelloch.
\newblock Some sequential algorithms are almost always parallel.
\newblock In {\em Proceedings of the 29th ACM Symposium on Parallelism in
  Algorithms and Architectures, SPAA}, pages 24--26, 2017.

\bibitem{blelloch2012internally}
Guy~E. Blelloch, Jeremy~T. Fineman, Phillip~B. Gibbons, and Julian Shun.
\newblock Internally deterministic parallel algorithms can be fast.
\newblock In J.~Ramanujam and P.~Sadayappan, editors, {\em Proceedings of the
  17th {ACM} {SIGPLAN} Symposium on Principles and Practice of Parallel
  Programming, {PPOPP} 2012, New Orleans, LA, USA, February 25-29, 2012}, pages
  181--192. {ACM}, 2012.

\bibitem{BFS12}
Guy~E Blelloch, Jeremy~T Fineman, and Julian Shun.
\newblock Greedy sequential maximal independent set and matching are parallel
  on average.
\newblock In {\em Proceedings of the twenty-fourth annual ACM symposium on
  Parallelism in algorithms and architectures}, pages 308--317. ACM, 2012.

\bibitem{blelloch2016parallelism}
Guy~E Blelloch, Yan Gu, Julian Shun, and Yihan Sun.
\newblock Parallelism in randomized incremental algorithms.
\newblock In {\em Proceedings of the 28th ACM Symposium on Parallelism in
  Algorithms and Architectures}, pages 467--478. ACM, 2016.

\bibitem{Cilk}
Robert~D Blumofe, Christopher~F Joerg, Bradley~C Kuszmaul, Charles~E Leiserson,
  Keith~H Randall, and Yuli Zhou.
\newblock Cilk: An efficient multithreaded runtime system.
\newblock {\em Journal of parallel and distributed computing}, 37(1):55--69,
  1996.

\bibitem{blumofe}
Robert~D Blumofe and Charles~E Leiserson.
\newblock Scheduling multithreaded computations by work stealing.
\newblock {\em Journal of the ACM (JACM)}, 46(5):720--748, 1999.

\bibitem{calkin1990probabilistic}
Neil Calkin and Alan Frieze.
\newblock Probabilistic analysis of a parallel algorithm for finding maximal
  independent sets.
\newblock {\em Random Structures \& Algorithms}, 1(1):39--50, 1990.

\bibitem{coppersmith1987parallel}
Don Coppersmith, Prabhakar Raghavan, and Martin Tompa.
\newblock Parallel graph algorithms that are efficient on average.
\newblock In {\em Foundations of Computer Science, 1987., 28th Annual Symposium
  on}, pages 260--269. IEEE, 1987.

\bibitem{demetrescu2009shortest}
Camil Demetrescu, Andrew~V Goldberg, and David~S Johnson.
\newblock {\em The Shortest Path Problem: Ninth DIMACS Implementation
  Challenge}, volume~74.
\newblock American Mathematical Soc., 2009.

\bibitem{dhulipala17julienne}
Laxman Dhulipala, Guy Blelloch, and Julian Shun.
\newblock Julienne: A framework for parallel graph algorithms using
  work-efficient bucketing.
\newblock In {\em Proceedings of the 29th ACM Symposium on Parallelism in
  Algorithms and Architectures}, SPAA '17, pages 293--304, New York, NY, USA,
  2017. ACM.

\bibitem{dhulipala2018theoretically}
Laxman Dhulipala, Guy~E Blelloch, and Julian Shun.
\newblock Theoretically efficient parallel graph algorithms can be fast and
  scalable.
\newblock {\em arXiv preprint arXiv:1805.05208}, 2018.

\bibitem{dijkstra1959note}
Edsger~W Dijkstra.
\newblock A note on two problems in connexion with graphs.
\newblock {\em Numerische mathematik}, 1(1):269--271, 1959.

\bibitem{FN18}
Manuela Fischer and Andreas Noever.
\newblock Tight analysis of parallel randomized greedy mis.
\newblock In {\em Proceedings of the Twenty-Ninth Annual ACM-SIAM Symposium on
  Discrete Algorithms}, pages 2152--2160. SIAM, 2018.

\bibitem{gonzalez2012powergraph}
Joseph~E. Gonzalez, Yucheng Low, Haijie Gu, Danny Bickson, and Carlos Guestrin.
\newblock Powergraph: Distributed graph-parallel computation on natural graphs.
\newblock In Chandu Thekkath and Amin Vahdat, editors, {\em 10th {USENIX}
  Symposium on Operating Systems Design and Implementation, {OSDI} 2012,
  Hollywood, CA, USA, October 8-10, 2012}, pages 17--30. {USENIX} Association,
  2012.

\bibitem{Haas}
Andreas Haas, Michael Lippautz, Thomas~A. Henzinger, Hannes Payer, Ana
  Sokolova, Christoph~M. Kirsch, and Ali Sezgin.
\newblock Distributed queues in shared memory: multicore performance and
  scalability through quantitative relaxation.
\newblock In Hubertus Franke, Alexander Heinecke, Krishna~V. Palem, and Eli
  Upfal, editors, {\em Computing Frontiers Conference, CF'13, Ischia, Italy,
  May 14 - 16, 2013}, pages 17:1--17:9. {ACM}, 2013.

\bibitem{imam2015load}
Shams Imam and Vivek Sarkar.
\newblock Load balancing prioritized tasks via work-stealing.
\newblock In {\em European Conference on Parallel Processing}, pages 222--234.
  Springer, 2015.

\bibitem{Swarm}
Mark~C Jeffrey, Suvinay Subramanian, Cong Yan, Joel Emer, and Daniel Sanchez.
\newblock Unlocking ordered parallelism with the swarm architecture.
\newblock {\em IEEE Micro}, 36(3):105--117, 2016.

\bibitem{KarZha93}
R.~M. Karp and Y.~Zhang.
\newblock Parallel algorithms for backtrack search and branch-and-bound.
\newblock {\em Journal of the ACM}, 40(3):765--789, 1993.

\bibitem{lenharth2015priority}
Andrew Lenharth, Donald Nguyen, and Keshav Pingali.
\newblock Priority queues are not good concurrent priority schedulers.
\newblock In {\em European Conference on Parallel Processing}, pages 209--221.
  Springer, 2015.

\bibitem{snapnets}
Jure Leskovec and Andrej Krevl.
\newblock {SNAP Datasets}: {Stanford} large network dataset collection.
\newblock \url{http://snap.stanford.edu/data}, June 2014.

\bibitem{meyer2003delta}
Ulrich Meyer and Peter Sanders.
\newblock $\delta$-stepping: a parallelizable shortest path algorithm.
\newblock {\em Journal of Algorithms}, 49(1):114--152, 2003.

\bibitem{Nguyen13}
Donald Nguyen, Andrew Lenharth, and Keshav Pingali.
\newblock A lightweight infrastructure for graph analytics.
\newblock In {\em Proceedings of the Twenty-Fourth ACM Symposium on Operating
  Systems Principles}, SOSP '13, pages 456--471, New York, NY, USA, 2013. ACM.

\bibitem{MQ}
Hamza Rihani, Peter Sanders, and Roman Dementiev.
\newblock Brief announcement: Multiqueues: Simple relaxed concurrent priority
  queues.
\newblock In {\em Proceedings of the 27th ACM Symposium on Parallelism in
  Algorithms and Architectures}, SPAA '15, pages 80--82, New York, NY, USA,
  2015. ACM.

\bibitem{making-kjell-happy}
Konstantinos Sagonas and Kjell Winblad.
\newblock The contention avoiding concurrent priority queue.
\newblock In {\em International Workshop on Languages and Compilers for
  Parallel Computing}, pages 314--330. Springer, 2016.

\bibitem{sagonas2017contention}
Konstantinos Sagonas and Kjell Winblad.
\newblock A contention adapting approach to concurrent ordered sets.
\newblock {\em Journal of Parallel and Distributed Computing}, 2017.

\bibitem{LotanShavit}
Nir Shavit and Itay Lotan.
\newblock Skiplist-based concurrent priority queues.
\newblock In {\em Parallel and Distributed Processing Symposium, 2000. IPDPS
  2000. Proceedings. 14th International}, pages 263--268. IEEE, 2000.

\bibitem{shun13priority}
Julian Shun, Guy~E. Blelloch, Jeremy~T. Fineman, and Phillip~B. Gibbons.
\newblock Reducing contention through priority updates.
\newblock In {\em Proceedings of the Twenty-fifth Annual ACM Symposium on
  Parallelism in Algorithms and Architectures}, SPAA '13, pages 152--163, New
  York, NY, USA, 2013. ACM.

\bibitem{BFS14}
Julian Shun, Yan Gu, Guy~E Blelloch, Jeremy~T Fineman, and Phillip~B Gibbons.
\newblock Sequential random permutation, list contraction and tree contraction
  are highly parallel.
\newblock In {\em Proceedings of the twenty-sixth annual ACM-SIAM symposium on
  Discrete algorithms}, pages 431--448. SIAM, 2014.

\bibitem{klsm}
Martin Wimmer, Jakob Gruber, Jesper~Larsson Tr{\"{a}}ff, and Philippas Tsigas.
\newblock The lock-free k-lsm relaxed priority queue.
\newblock {\em CoRR}, abs/1503.05698, 2015.

\end{thebibliography}

\end{document}